\newtheorem{theorem}{Theorem}
\newtheorem{proposition}{Proposition}
\theoremstyle{nonumberplain}
\newtheorem{proof}{Proof}
\begin{document}
\title{Maximum Likelihood Estimation from Sign Measurements with Sensing Matrix Perturbation}

\author{Jiang~Zhu, Xiaohan~Wang, and Yuantao~Gu
\thanks{The authors are with the Department of Electronic Engineering, Tsinghua University, Beijing 100084, CHINA. The corresponding author of this work is Yuantao Gu (e-mail: gyt@tsinghua.edu.cn).}
}

\date{December 6, 2013}

\maketitle

\begin{abstract}
The problem of estimating an unknown deterministic parameter vector from sign measurements with a perturbed sensing matrix is studied in this paper. We analyze the best achievable mean square error (MSE) performance by exploring the corresponding Cram\'{e}r-Rao Lower Bound (CRLB). To estimate the parameter, the maximum likelihood (ML) estimator is utilized and its consistency is proved. We show that the perturbation on the sensing matrix exacerbates the performance of ML estimator in most cases. However, suitable perturbation may improve the performance in some special cases. Then we reformulate the original ML estimation problem as a convex optimization problem, which can be solved efficiently. Furthermore, theoretical analysis implies that the perturbation-ignored estimation is a scaled version with the same direction of the ML estimation. Finally, numerical simulations are performed to validate our theoretical analysis.

\textbf{Keywords}:
Maximum likelihood estimation, sign measurements, Gaussian perturbation, CRLB.
\end{abstract}


\section{Introduction}
The linear regression problem with perturbed sensing matrix has been extensively studied in recent years \cite{Beck, Minimax, Compare_model_unc}. Mathematically, the vector ${\mathbf y}\in {\mathbb R}^N$ is observed via a corrupted sensing matrix as
\begin{align}\label{Pertur_lin_model}
     {\mathbf y}=({\mathbf H}+{\mathbf E})^{\rm T}{\mathbf w}+{\mathbf n},
\end{align}
where ${\mathbf H}\in {\mathbb R}^{p\times N}$ is a deterministic known sensing matrix, and $\mathbf E$ is a random matrix each of whose elements is \emph{i.i.d.}, $e_{ij}\sim \mathcal{N}(0,\sigma_e^2), i=1,\cdots, p, j=1,\cdots,N$. The additive noise vector $\mathbf n$ is independent of $\mathbf E$ and satisfies ${{\mathbf n}\sim {\mathcal N}({\mathbf 0},\sigma_n^2{\mathbf I})}$, where $\sigma_e^2$ is viewed as the strength of perturbation. To estimate the unknown parameter vector ${\mathbf w}\in {\mathbb R}^p$, the perturbation $\mathbf E$ is treated as a nuisance parameter and the maximum likelihood (ML) method is used. Several numerical methods have been proposed including minimax search, maximin search, and the classical expectation-maximization (EM) algorithm \cite{Compare_model_unc}.

It is natural to further study the parameter estimation problem with perturbed sensing matrix by the sign measurements
\begin{align}
     {\mathbf y}={\rm sign}\left(\left({\mathbf H}+{\mathbf E}\right)^{\rm T}\mathbf{w}+\mathbf{n}\right),\label{Gaussian_model}
\end{align}
where $\mathbf y$ denotes a binary measurement vector and ${\rm sign}(\cdot)$ is a vector each of whose entries is equal to the sign of the corresponding element (we assume that the sign of a real number is $1$ or $-1$, when the number is positive or nonpositive, respectively).

\subsection{Problem Background}

Most available works focus on the simplified case of (\ref{Gaussian_model}) in which the perturbation does not exist, i.e., ${\mathbf E}={\mathbf 0}$ \cite{Demaris, Conj_book}. In this setting, the model is reduced to
\begin{align}
     {\mathbf y}={\rm sign}\left({{\mathbf H}^{\rm T}}{\mathbf w}+{\mathbf n}\right).\label{Determin_matrix}
\end{align}
Model (\ref{Determin_matrix}) is closely related to the binary regression model in statistics, where only binary outcomes are obtained to estimate the factors that affect the results. When the noise is Gaussian distributed, the binary regression model is also called a probit model \cite{WNewey}, which can be described as
\begin{align}
     {\mathbf y}={\rm sign}\left({{\mathbf H}^{\rm T}}\tilde{\mathbf w}+\tilde{\mathbf n}\right),\label{Binary_reg}
\end{align}
where $\tilde{\mathbf n}$ is a normalized Gaussian vector satisfying $\tilde{\mathbf n} \sim \mathcal{N}({\mathbf 0},{\mathbf I})$, and $\tilde{\mathbf w}= {\mathbf w}/\sigma_n$ is what we wish to estimate. Once the estimation of $\tilde{\mathbf w}$ is acquired, the distribution of the sign measurement ${\rm sign}({\mathbf h}^{\rm T}\tilde{\mathbf w}+\tilde{\mathbf n})$ can be predicted for a new ${\mathbf h}\in {\mathbb R}^p$.

Another application related to model (\ref{Determin_matrix}) is to estimate some physical quantities (pressure, temperature, mean-location, and etc.) based on binary quantized measurements in wireless sensor network. The mathematical model of most related works in this scenario is a special case of (\ref{Determin_matrix}) in which the parameter to be estimated is a scalar. In this application, there are a large number of spatially distributed nodes. Each node is available to a subset of observations and has to transmit the information to the fusion center. Due to the limited bandwidth, the node may quantize the measurements coarsely. It is known that the minimum variance of the estimator based on binary measurements is only $\pi/2$ times of the clairvoyant estimator \cite{Papadopoulos, Oppen}, which motivates researchers to achieve this excellent performance by proposing carefully designed strategies. In \cite{Giannakis_conc, Giannakis1_bandwidth}, distributed estimation algorithms are proposed to reduce the transmission requirements by exploiting spatial correlation. Furthermore, a universal decentralized estimation scheme is proposed to cope with the unknown noise distribution case \cite{Luo, Luo1, Luo2}. While all above works focus on the estimation of the scalar case, \cite{Giannakis2_bandwidth} analyzes the performance of the ML estimator for multivariate parameters with dithered quantization.

\subsection{Main Contribution}

This paper focuses on the ML estimation of the vector parameter from sign measurements with sensing matrix perturbation. The main contribution of this work is two-fold. On the one hand, the Cram\'{e}r-Rao Lower Bound (CRLB) on the mean square error (MSE) is theoretically derived to analyze the performance of unbiased estimators. The ML estimator is proved to be consistent, then its performance is studied using the CRLB. It is shown that the perturbation on the sensing matrix worsens the performance in most cases. However, suitable perturbation may improve the estimation accuracy in some special cases. On the other hand, the ML estimation problem is reformulated as a convex optimization problem, implying that if the global optimal point exists, there are numerical algorithms guaranteed to converge to it. We analyze the probability that the optimal point of the ML estimator exists. It is shown that moderate perturbation may be beneficial by providing randomness for the measurements. Moreover, the mismodeling effects is studied in the case that the perturbation is ignored. We show that the estimator ignoring the perturbation can provide a scaled estimation with the same direction with that of the ML estimator. It implies that we can also obtain the correct direction estimation when the perturbation information is unknown. Finally, we compare the MSE performance of the ML estimator against the CRLB by simulation.

\subsection{Related Work}

For model (\ref{Determin_matrix}), there has been a lot of works focusing on the estimation of a scalar parameter \cite{Oppen, Giannakis1_bandwidth, Giannakis2_bandwidth}.
In \cite{Oppen}, the case in which the sensing matrix is ${\mathbf H}=[1,\cdots,1]$ is studied. The parameter $w$ are supposed to lie in the range $(-\Delta,\Delta)$. Thus the worst-case CRLB is optimized with respect to the variance of the additive noise. It is also shown that the performance of the estimation can be improved by a periodic waveform or feedback signal prior to quantization. Recently, an additive outlier $\mathbf o$ is introduced in (\ref{Determin_matrix})  to model the errors \cite{Rob_conj} by
\begin{align}
     {\mathbf y}={\rm {sign}}\left({{\mathbf H}^{\rm T}}\mathbf{w}+\mathbf{n}+\mathbf{o}\right).\label{Outlier_model}\nonumber
\end{align}
The sparsity of the outliers is controlled. Desirable tradeoff between model fit and complexity is attained by a new classification-based approach. In \cite{Tsakonas_SpaC_con, Tsakonas_SpaC}, both the outliers and the unknown parameters are sparse. The ML method for the probit model is proposed to estimate the model parameters. Suppose that the numbers of nonzero entries of $\mathbf o$ and $\mathbf w$ are less than or equal to $k_o$ and $k_w$, respectively. By defining the concatenated matrix ${\mathbf Q}\triangleq [{\mathbf H}^{\rm T}, {\mathbf I}_{N\times N}]$, a sufficient condition for the identifiability of $\mathbf w$ and $\mathbf o$ can be described by
\begin{align}
{\rm {Spark}}({\mathbf Q})>2(k_o+k_w),\notag
\end{align}
where ${\rm {Spark}}({\mathbf Q})$ denotes the minimum number of the dependent columns in $\mathbf Q$. The ML estimation of the vector $[{\mathbf w}^{\rm T},{\mathbf o}^{\rm T}]^{\rm T}$ is equivalent to the following optimization problem
\begin{align}
&\underset{{\mathbf w},{\mathbf o}}{\operatorname{minimize}}~-\left(\sum \limits_{i=1}^{N}{\log}\Phi \left(y_i\frac{{\mathbf h}_i^{\rm{T}}{\mathbf w}+o_i}{\sigma_n}\right)\right)\notag\\
&{\operatorname{subject~to}}~\|{\mathbf w}\|_0\leq k_w, \|{\mathbf o}\|_0\leq k_o,\nonumber
\end{align}
where $\Phi (u)=\frac{1}{\sqrt{2\pi}}\int_{-\infty}^{u}{{{\textrm e}^{-\frac{x^2}{2}}}}{\textrm d}x$ is the cumulative distribution function of the standard Gaussian distribution. In \cite{Tsakonas_SpaC}, it is shows that the outliers and the unknown parameters can be jointly estimated by using the convex $l_1$-norm to replace with the cardinality constraint. Though some methodologies utilized in above papers are adopted in this work, the model they studied is different from \eqref{Gaussian_model}.

For the probit model (\ref{Binary_reg}), the standard ML procedure is often used to estimate the unknown parameter vector. The ML estimation is equivalent to the following optimization problem
\begin{align}
\underset{\tilde{\mathbf w}}{\operatorname{minimize}}-\sum \limits_{i=1}^{N}{\log }\Phi \left(y_i{\mathbf h}_i^{\rm{T}}\tilde{\mathbf w}\right).\notag
\end{align}
This problem is convex and is first solved in \cite{Bliss}.
Model (\ref{Binary_reg}) with uncertainty in the sensing matrix has been studied in a number of literature. There are two approaches to describe the uncertainty of the sensing matrix \cite{Carroll}. The first approach is the standard errors in variables (EIV) model, where $\mathbf H$ is modeled as a deterministic unknown sensing matrix, and $\mathbf G$ is a noisy observation on $\mathbf H$ which can be described by ${\mathbf G}={\mathbf H}+{\mathbf E}$. Given the observations $\mathbf G$ and $\mathbf y$, both $\tilde{\mathbf w}$ and $\mathbf H$ are estimated by solving
\begin{align}
\underset{\tilde{\mathbf w},{\mathbf H}}{\operatorname{minimize}}~-l({{\mathbf y},{\mathbf G}};{\tilde{\mathbf w},{\mathbf H}}),\label{EIV_log}
\end{align}
where $l({{\mathbf y},{\mathbf G}};{\tilde{\mathbf w},{\mathbf H}})$ is the log-likelihood function of $\mathbf y$ and $\mathbf G$ parameterized by $\tilde{\mathbf w}$ and $\mathbf H$. Equation (\ref{EIV_log}) is equivalent to
\begin{align}\label{EIV_probit}
\underset{{\mathbf H},\tilde{\mathbf w}}{\operatorname{minimize}}\left(-\sum \limits_{i=1}^{N}{\log }\Phi \left(y_i{\mathbf h}_i^{\rm{T}}\tilde{\mathbf w}\right)+\frac{\|{\mathbf H}-{\mathbf G}\|_{\rm F}^2}{2\sigma_e^2}\right).
\end{align}
The number of variables increases by a factor of $p({\textstyle 1}+{\textstyle 1}/{\textstyle N})$ with respect to the number of measurements $N$. In \cite{Carroll}, it is shown that the ML estimator is in general not consistent, implying that the ML estimator will not converge to the true parameter in probability when the number of measurements tends to infinity. The second approach to describe the uncertainty is to model the sensing matrix as a random matrix. The statistical characterization of the sensing matrix is known, thus the nuisance parameter can be eliminated and the estimation of $\tilde{\mathbf w}$ is available. The above works focused on the regression analysis and some basic assumptions are different from this work.
\vspace{1em}

\leftline{\normalsize{\textbf Notation}}

For any scalar $x \in \mathbb R$, $\lfloor x \rfloor$ ($\lceil x \rceil$) denotes the nearest integer less than or equal to (greater than or equal to) $x$. For an unknown estimated parameter vector $\mathbf w$ (scalar parameter $w$), ${\mathbf w}_0$ ($w_0$) denotes its true value. For a random vector $\mathbf y$, ${\rm E}_{\mathbf y}[\cdot]$ denotes the expectation taken with respect to $\mathbf y$. For ${\mathbf w} = [w_1,\cdots,w_n]^{\rm T}$ and a continuous and differentiable function $f:\mathbb{R}^n\rightarrow \mathbb{R}$, $\nabla_{\mathbf w}f$ and $\nabla_{\mathbf w}^2f$ denotes its gradient and Hessian. For a vector function ${\mathbf g}:S \rightarrow \mathbb{R}^r$ defined on a set $S$ in $\mathbb{R}^s$, ${\partial{\mathbf g}({\boldsymbol \theta})}/{\partial{\boldsymbol \theta}}$ denotes its Jacobian matrix $[{\partial{g}_i({\theta})}/{\partial{\theta}_j}]_{r\times s}$. For any appropriate matrix $\mathbf A$, $a_{ij}$ denotes its ($i$,$j$)th element, ${\mathbf a}_i$ denotes its $i$th column, $\|\mathbf A\|_{\rm F}$ denotes its Frobenius norm, ${\rm {tr}}(\mathbf A)$ denotes its trace, ${\mathbf A}\succeq {\mathbf 0}$ (${\mathbf A}\succ {\mathbf 0}$) means that $\mathbf A$ is positive semidefinite (positive definite), and ${\mathbf A}\succeq {\mathbf B}$ means that ${\mathbf A}-{\mathbf B}\succeq {\mathbf 0}$. ${\rm diag}(\lambda_1,\cdots,\lambda_ p)$ is a $p\times p$ diagonal matrix with the $i$th diagonal elements $\lambda_i$. Other notations will be introduced when needed.

The rest of this paper is organized as follows. In Section II, the ML estimator is utilized and its consistency is proved. In Section III, the theoretical CRLB is derived, and the theoretical performance limits is analyzed. In Section IV, we reformulate the original ML estimation problem as a convex optimization problem. In Section V, we discuss the probability that the likelihood function is unimodal, and provide some insights on the similarity and the difference between the ML estimator and the perturbation-ignored estimator. In Section VI, the numerical results are presented. Finally we conclude the paper in Section VII.

\section{Maximum Likelihood Estimator}

The model (\ref{Gaussian_model}) can be written in a more canonical form
\begin{align}\label{Pertubation_model}
     {\mathbf y}={\rm {sign}}\left({\mathbf H}^{\rm T}{\mathbf w}+{\mathbf z}\right),
\end{align}
where ${\mathbf z}={{\mathbf E}^{\rm T}}{\mathbf w}+{\mathbf n}$ is regarded as the \emph{sum} of a multiplicative noise and an additive noise \cite{Meaerr_book}. The variance of the equivalent noise $\mathbf z$ depends on the parameter vector $\mathbf w$, which makes the problem more complex than the perturbation free setting. Because $e_{ij}$ is \emph{i.i.d.} Gaussian random variable, ${{\mathbf E}^{\rm T}}{\mathbf w}$ is an $N$ dimensional Gaussian distributed random vector. It follows by straightforward calculation that ${\rm E}[{{\mathbf E}^{\rm T}}{\mathbf w}]={\mathbf 0}$, and ${\rm {Cov}}[{{\mathbf E}^{\rm T}}{\mathbf w}] = \sigma_e^2\|{\mathbf w}\|_2^2{\mathbf I}$. Thus the variance of the multiplicative noise is $\sigma_e^2\|{\mathbf w}\|_2^2$. Then, from the mutual independence of ${{\mathbf E}^{\rm T}}{\mathbf w}$ and ${\mathbf n}$, one has ${\mathbf z}\sim {\mathcal N}({\mathbf 0},\sigma_z^2{\mathbf I})$, where
\begin{align}\label{Var_z}
     \sigma_z^2=\|\mathbf w\|_2^2\sigma_e^2+\sigma_n^2.
\end{align}

Now we calculate the likelihood function ${\rm Pr}({\mathbf y};{\mathbf w})$. Let ${\bf H}=[{\bf h}_1, {\bf h}_2, \cdots, {\bf h}_N]$, ${\mathcal I}_+$ and ${\mathcal I}_-$ denote the set of indices $\{i|y_i=1\}$ and $\{i|y_i=-1\}$, respectively. By partitioning the observations into ${\mathcal I}_+$ and ${\mathcal I}_-$, the likelihood function ${\rm Pr}(\mathbf y;\mathbf w)$ is calculated to be
\begin{align}\label{Likelihood}
     {\rm{Pr}}(\mathbf y;\mathbf w)=&\prod\limits_{i\in {\mathcal I}_+}{\rm{Pr}}({\mathbf h}_i^{\rm T}{\mathbf w}+z_i> 0)\prod\limits_{i\in {\mathcal I}_-}{\rm{Pr}}({\mathbf h}_i^{\rm T}{\mathbf w}+z_i\leq 0)\nonumber\\
     =&\prod\limits_{i\in {\mathcal I}_+}\Phi \left(\frac{{\mathbf h}_i^{\rm T}{\mathbf w}}{\sigma_z}\right)\prod\limits_{i\in {\mathcal I}_-}\Phi \left(-\frac{{\mathbf h}_i^{\rm T}{\mathbf w}}{\sigma_z}\right)\nonumber\\
     =&\prod\limits_{i=1}^{N}\Phi \left(y_i\frac{{\mathbf h}_i^{\rm T}{\mathbf w}}{\sigma_z}\right).\nonumber
\end{align}
The corresponding log-likelihood function $l({\mathbf y};{\mathbf w})$ is given by
\begin{align}
 l({\mathbf y};{\mathbf w})=\sum\limits_{i=1}^{N}{\log }\Phi \left(y_i\frac{{{\mathbf h}_i^{\rm T}{\mathbf w}}}{{\sigma_z}}\right).\label{Loglike}
\end{align}
Therefore, the ML estimation of the vector $\mathbf w$ is equivalent to minimizing the negative log-likelihood function (\ref{Loglike}). Substituting (\ref{Var_z}) in (\ref{Loglike}), one has
\begin{align}
\underset{{\mathbf w}\in {\mathbb R}^p} {\operatorname{minimize}}~-\sum_{i=1}^N{\log }\Phi \left(y_i\frac{{{\mathbf h}_i^{\rm T}{\mathbf w}}}{{\sqrt{\|\mathbf w\|_2^2\sigma_e^2+\sigma_n^2}}}\right).\label{gen_p}
\end{align}

Now we briefly discuss the statistical identifiability of the model (\ref{Gaussian_model}). The model is statistically identifiable if the underlying parameter can be estimated accurately by an infinite number of measurements. Mathematically, this means that if ${\mathbf w}_1$ is not equal to ${\mathbf w}_2$, the corresponding measurements ${\mathbf y}_1$ and ${\mathbf y}_2$ must follow different probability distributions. A necessary and sufficient condition to guarantee the identifiability of model (\ref{Gaussian_model}) is that ${\mathbf H}$ should be of full row rank, which is the same with the linear regression model (\ref{Pertur_lin_model}).

We will close this section by studying the consistency of the ML estimator on
\begin{equation}\label{GeneralPertubation_model}
     y_i = {\rm {sign}}\left({\mathbf h}_i^{\rm T}{\mathbf w}+{z}_i\right), \quad i=1,2,\cdots,N,
\end{equation}
where ${\bf h}_i$ are generated from any underlying continuous distribution and ${ z}_i\sim {\mathcal N}({\mathbf 0},\sigma_z^2)$ is an \emph{i.i.d.} sequence.
The consistency means that as the number of the measurements $N$ tends to infinity, the estimator converges to the true parameter value ${\mathbf w}_0$ in probability. Though it has been demonstrated that the ML estimator (\ref{EIV_probit}) in EIV model is not consistent in general \cite{Carroll}, we could prove that the consistency of the ML estimator is satisfied in the model (\ref{GeneralPertubation_model}).
\begin{theorem}\label{ML_con}
Assume that $\mathbf w$ lies in the parameter space ${\mathcal W} = \{{\mathbf w}|\|\mathbf w\|_2\leq R_w\}$, where $R_w$ is a positive constant. $\{{\mathbf h}_i\}_{i=1}^N$ are generated from an underlying continuous distribution. The ML estimator (\ref{gen_p}) is consistent.
\end{theorem}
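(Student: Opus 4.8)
The plan is to treat the ML estimator (\ref{gen_p}) as an extremum estimator and to run the classical Wald-type consistency argument: identification of ${\mathbf w}_0$ by a population criterion, together with uniform convergence of the sample criterion to it over the compact set $\mathcal W$. Write $\sigma_z({\mathbf w})=\sqrt{\|{\mathbf w}\|_2^2\sigma_e^2+\sigma_n^2}$ and $m({\mathbf h},y;{\mathbf w})=\log\Phi\!\left(y\,{\mathbf h}^{\rm T}{\mathbf w}/\sigma_z({\mathbf w})\right)$, so that the ML estimator maximizes $Q_N({\mathbf w}):=\frac1N\sum_{i=1}^N m({\mathbf h}_i,y_i;{\mathbf w})$ over $\mathcal W$, and set $Q({\mathbf w}):={\rm E}[m({\mathbf h},y;{\mathbf w})]$, the expectation under the true law (${\mathbf h}$ drawn from the given continuous distribution, $y$ the sign measurement produced by ${\mathbf w}_0$). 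Since $Q_N$ is continuous in ${\mathbf w}$ and $\mathcal W$ is compact, a maximizer $\hat{\mathbf w}_N$ exists; continuity holds even at ${\mathbf w}={\mathbf 0}$ because $\sigma_z({\mathbf w})\ge\sigma_n>0$ forces ${\mathbf h}^{\rm T}{\mathbf w}/\sigma_z({\mathbf w})\to 0$ there.

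For identification, condition on ${\mathbf h}$: then $y$ is two-valued with ${\rm Pr}(y=1\mid{\mathbf h})=\Phi({\mathbf h}^{\rm T}{\mathbf w}_0/\sigma_z({\mathbf w}_0))\in(0,1)$, so Gibbs' inequality (nonnegativity of the Kullback--Leibler divergence between the two Bernoulli laws) gives ${\rm E}_y[\,m({\mathbf h},y;{\mathbf w})\mid{\mathbf h}\,]\le{\rm E}_y[\,m({\mathbf h},y;{\mathbf w}_0)\mid{\mathbf h}\,]$, with equality if and only if ${\mathbf h}^{\rm T}{\mathbf w}/\sigma_z({\mathbf w})={\mathbf h}^{\rm T}{\mathbf w}_0/\sigma_z({\mathbf w}_0)$. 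Taking expectation over ${\mathbf h}$ yields $Q({\mathbf w})\le Q({\mathbf w}_0)$, and equality forces ${\mathbf h}^{\rm T}\!\big({\mathbf w}/\sigma_z({\mathbf w})-{\mathbf w}_0/\sigma_z({\mathbf w}_0)\big)=0$ almost surely; since the continuous distribution of ${\mathbf h}$ is not supported on any hyperplane, this means ${\mathbf w}/\sigma_z({\mathbf w})={\mathbf w}_0/\sigma_z({\mathbf w}_0)$. It then suffices to observe that ${\mathbf w}\mapsto{\mathbf w}/\sigma_z({\mathbf w})$ is injective on ${\mathbb R}^p$: it preserves the direction of ${\mathbf w}$, so ${\mathbf w}$ and ${\mathbf w}_0$ lie on a common ray through the origin, and along that ray the length obeys $t\mapsto t/\sqrt{t^2\sigma_e^2+\sigma_n^2}$, which is strictly increasing on $[0,\infty)$ (its derivative is $\sigma_n^2(t^2\sigma_e^2+\sigma_n^2)^{-3/2}>0$); hence $\|{\mathbf w}\|_2=\|{\mathbf w}_0\|_2$ and ${\mathbf w}={\mathbf w}_0$. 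Thus ${\mathbf w}_0$ is the unique maximizer of $Q$ over $\mathcal W$. This injectivity step -- the observation that the ${\mathbf w}$-dependent equivalent noise level $\sigma_z({\mathbf w})$ does not spoil identifiability -- is the one genuinely model-specific ingredient, and is where I would concentrate the write-up.

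For the uniform convergence, $\mathcal W$ is compact, ${\mathbf w}\mapsto m({\mathbf h},y;{\mathbf w})$ is continuous, and an integrable envelope is available: uniformly on $\mathcal W$ one has $|{\mathbf h}^{\rm T}{\mathbf w}/\sigma_z({\mathbf w})|\le R_w\|{\mathbf h}\|_2/\sigma_n$, while $|\log\Phi(u)|=-\log\Phi(u)\le C_1u^2+C_2$ for all $u\in{\mathbb R}$ by standard Gaussian tail estimates, so $|m({\mathbf h},y;{\mathbf w})|\le C_1R_w^2\|{\mathbf h}\|_2^2/\sigma_n^2+C_2$, which is integrable provided ${\rm E}[\|{\mathbf h}\|_2^2]<\infty$. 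Under this mild (and implicitly assumed) moment condition the uniform law of large numbers for i.i.d.\ data gives $\sup_{{\mathbf w}\in\mathcal W}|Q_N({\mathbf w})-Q({\mathbf w})|\stackrel{p}{\to}0$, and the same envelope (with dominated convergence) makes $Q$ continuous. The standard argument then closes it: for $\epsilon>0$ put $\delta:=Q({\mathbf w}_0)-\max\{Q({\mathbf w}):{\mathbf w}\in\mathcal W,\ \|{\mathbf w}-{\mathbf w}_0\|_2\ge\epsilon\}$, which is strictly positive and attained by compactness, continuity of $Q$, and uniqueness of the maximizer; on the event $\{\sup_{{\mathbf w}}|Q_N-Q|<\delta/2\}$, whose probability tends to $1$, the inequality $\|\hat{\mathbf w}_N-{\mathbf w}_0\|_2\ge\epsilon$ would give $Q_N(\hat{\mathbf w}_N)<Q(\hat{\mathbf w}_N)+\delta/2\le Q({\mathbf w}_0)-\delta/2<Q_N({\mathbf w}_0)$, contradicting optimality; hence $\hat{\mathbf w}_N\stackrel{p}{\to}{\mathbf w}_0$. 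The main obstacle I anticipate is not the mechanics of this chain but making precise what "underlying continuous distribution" must guarantee -- non-concentration on any hyperplane, for identification, and a finite second moment, for the envelope -- so that both halves of the argument genuinely apply.
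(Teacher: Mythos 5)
Your proposal is correct and follows essentially the same route as the paper: the classical Wald/Newey--McFadden consistency argument for extremum estimators, combining compactness of $\mathcal W$, uniqueness of the population maximizer at ${\mathbf w}_0$ via the information (KL) inequality, and uniform convergence of the normalized log-likelihood obtained from an integrable envelope (the paper derives its dominating function by a mean-value expansion of $\log\Phi$ rather than your direct quadratic bound $-\log\Phi(u)\le C_1u^2+C_2$, but the effect is identical and both implicitly require ${\rm E}[\|{\mathbf h}\|_2^2]<\infty$). Your explicit verification of identification --- that equality in the KL step forces ${\mathbf w}/\sigma_z({\mathbf w})={\mathbf w}_0/\sigma_z({\mathbf w}_0)$ and that this reparametrization is injective --- is a welcome elaboration of a step the paper disposes of by citation.
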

\begin{proof}
The proof is postponed to Appendix A.
\end{proof}

One may notice that the unknown parameter is assumed to be bounded, which is a technical mathematical condition needed for many theoretical analysis \cite{Compare_model_unc}. In practice, we can choose $R_w$ sufficiently large, then the estimator is assumed to have no knowledge of this constraint.

\section{Cram\'{e}r-Rao Lower Bound}

We now provide a lower bound on the variance of any unbiased estimator of the model (\ref{Gaussian_model}). It is well known that the MSE of the ML estimator asymptotically achieves the CRLB under certain regularity conditions. Therefore, the CRLB provides a reasonable benchmark shedding light on the performance of the ML estimator.

The Fisher information matrix (FIM) is used to find the bounds for unbiased estimators. We can calculate the FIM as the negative expectation of the Hessian of the log-likelihood function with respect to $\mathbf y$,
\begin{align}
{\mathbf J}({\mathbf w})=-{\rm E}_{\mathbf y}[\nabla_{\mathbf w}^2l({\mathbf y};{\mathbf w})].\notag
\end{align}
The CRLB matrix is equal to the inverse of the FIM by
\begin{align}
{\rm CRLB}({\mathbf w})= \left({\mathbf J}({\mathbf w})\right)^{-1},\notag
\end{align}
and the CRLB on the MSE is the trace of the CRLB matrix.

Now a closed-form expression of the CRLB on the MSE for the model (\ref{Gaussian_model}) is provided in the following theorem.
\begin{theorem} \label{crlb_theorem}
Consider the estimation of $\mathbf w$ in the model (\ref{Gaussian_model}) with both $\sigma_n^2$ and $\sigma_e^2$ known. The FIM is ${\mathbf J}(\mathbf{w})={\mathbf M}{\mathbf \Lambda}{\mathbf M}^{\rm T}$ and the MSE $ {\rm mse}({\hat{\mathbf w}})={\rm E}[\|{\hat{\mathbf w}}-{\mathbf w}\|_2^2]$ of any unbiased estimator ${\hat{\mathbf w}}$ satisfies
\begin{align} \label{mse_crlb}
{\rm {mse}}({\hat{\mathbf w}})\geq{\rm {tr}}\left(\left({\mathbf M}{\mathbf \Lambda}{\mathbf M}^{\rm T}\right)^{-1}\right),
\end{align}
where ${\mathbf \Lambda}$ is a positive diagonal matrix with elements
\begin{align}\label{lambda_ii}
    {\lambda}_{ii}= \frac{1}{{2\pi \sigma_z^2}}{\left( \frac{1}{{\Phi \left(\frac{\mathbf{h}_{i}^{\rm T}{\mathbf w}}{{\sigma_z}} \right)}}+\frac{1}{{\Phi \left(-\frac{{\mathbf h}_i^{\rm T}{\mathbf w}}{{\sigma_z}} \right)}} \right)}{{\rm e}^{-\frac{{{\left({\mathbf h}_{i}^{\rm T}{\mathbf w} \right)}^2}}{{\sigma_z^2}}}},
\end{align}
and
\begin{align}\label{Def_M}
    {\mathbf M}= \left({\mathbf I}-\frac{\sigma_e^2}{\sigma_z^2}{\mathbf w}{\mathbf w}^{\rm T}\right){\mathbf H}.
\end{align}
\end{theorem}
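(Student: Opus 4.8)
The plan is to use the independence of the $y_i$ together with the standard form of the Fisher information for a scalar probit observation, thereby reducing the computation of ${\mathbf J}({\mathbf w})$ to a per-observation rank-one contribution. Write $g_i({\mathbf w})={\mathbf h}_i^{\rm T}{\mathbf w}/\sigma_z$ with $\sigma_z=\sigma_z({\mathbf w})$ as in (\ref{Var_z}), so that (\ref{Loglike}) reads $l({\mathbf y};{\mathbf w})=\sum_{i=1}^N\log\Phi(y_i g_i({\mathbf w}))$. Since $\Phi$ is smooth and strictly positive, the usual regularity conditions hold, and ${\mathbf J}({\mathbf w})=-{\rm E}_{\mathbf y}[\nabla_{\mathbf w}^2 l]={\rm E}_{\mathbf y}[\nabla_{\mathbf w}l\,(\nabla_{\mathbf w}l)^{\rm T}]=\sum_{i=1}^N{\rm E}_{y_i}[\nabla_{\mathbf w}\log\Phi(y_i g_i)(\nabla_{\mathbf w}\log\Phi(y_i g_i))^{\rm T}]$; it therefore suffices to evaluate a single term.

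First I would differentiate $g_i$, where the one subtlety is that the equivalent noise level $\sigma_z$ itself depends on ${\mathbf w}$. From (\ref{Var_z}) one has $\nabla_{\mathbf w}\sigma_z=\sigma_e^2{\mathbf w}/\sigma_z$, so the chain rule yields
\begin{align}
\nabla_{\mathbf w}g_i=\frac{{\mathbf h}_i}{\sigma_z}-\frac{{\mathbf h}_i^{\rm T}{\mathbf w}}{\sigma_z^2}\nabla_{\mathbf w}\sigma_z=\frac{1}{\sigma_z}\left({\mathbf I}-\frac{\sigma_e^2}{\sigma_z^2}{\mathbf w}{\mathbf w}^{\rm T}\right){\mathbf h}_i=\frac{1}{\sigma_z}{\mathbf m}_i,\nonumber
\end{align}
where ${\mathbf m}_i$ is the $i$th column of the matrix ${\mathbf M}$ defined in (\ref{Def_M}). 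This is precisely where the perturbation-induced, ${\mathbf w}$-dependent variance enters, and it is what replaces the plain ${\mathbf h}_i$ of the perturbation-free probit model by the projection-type factor $({\mathbf I}-\sigma_e^2\sigma_z^{-2}{\mathbf w}{\mathbf w}^{\rm T}){\mathbf h}_i$.

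Next I would substitute into the per-observation information. For the scalar probit model with ${\rm Pr}(y_i=1)=\Phi(g_i)$ the score is $y_i\phi(y_i g_i)\Phi(y_i g_i)^{-1}\nabla_{\mathbf w}g_i$, with $\phi=\Phi'$ the standard normal density; averaging over the two outcomes and using $\phi(-u)=\phi(u)$, $\Phi(-u)=1-\Phi(u)$ and $y_i^2=1$ gives
\begin{align}
{\mathbf J}_i({\mathbf w})=\phi(g_i)^2\left(\frac{1}{\Phi(g_i)}+\frac{1}{\Phi(-g_i)}\right)\nabla_{\mathbf w}g_i\,(\nabla_{\mathbf w}g_i)^{\rm T}.\nonumber
\end{align}
Plugging in $\phi(g_i)^2=(2\pi)^{-1}{\rm e}^{-({\mathbf h}_i^{\rm T}{\mathbf w})^2/\sigma_z^2}$ and $\nabla_{\mathbf w}g_i={\mathbf m}_i/\sigma_z$ turns this into ${\mathbf J}_i({\mathbf w})=\lambda_{ii}{\mathbf m}_i{\mathbf m}_i^{\rm T}$ with $\lambda_{ii}$ exactly as in (\ref{lambda_ii}). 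Summing over $i$ and recombining columns, ${\mathbf J}({\mathbf w})=\sum_{i=1}^N\lambda_{ii}{\mathbf m}_i{\mathbf m}_i^{\rm T}={\mathbf M}{\mathbf \Lambda}{\mathbf M}^{\rm T}$. Finally ${\mathbf \Lambda}\succ{\mathbf 0}$ since each $\lambda_{ii}>0$, and ${\mathbf M}$ has full row rank because ${\mathbf H}$ does (the identifiability condition for (\ref{Gaussian_model})) and ${\mathbf I}-\sigma_e^2\sigma_z^{-2}{\mathbf w}{\mathbf w}^{\rm T}$ is nonsingular — its smallest eigenvalue equals $\sigma_n^2/\sigma_z^2>0$ — so ${\mathbf M}{\mathbf \Lambda}{\mathbf M}^{\rm T}$ is invertible, ${\rm CRLB}({\mathbf w})=({\mathbf M}{\mathbf \Lambda}{\mathbf M}^{\rm T})^{-1}$, and taking the trace gives (\ref{mse_crlb}).

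The bulk of this is routine; the steps needing care are the chain-rule differentiation through $\sigma_z({\mathbf w})$ (which must reproduce the factor ${\mathbf I}-\sigma_e^2\sigma_z^{-2}{\mathbf w}{\mathbf w}^{\rm T}$, not a bare ${\mathbf h}_i$), the verification of the regularity conditions that justify writing the FIM as the expected outer product of the scores, and the check that ${\mathbf M}{\mathbf \Lambda}{\mathbf M}^{\rm T}$ is actually nonsingular so that the bound in (\ref{mse_crlb}) is well defined.
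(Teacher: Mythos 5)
Your proof is correct, but it takes a genuinely different route from the paper. You compute the Fisher information directly in the original parameter $\mathbf w$: you differentiate $g_i(\mathbf w)={\mathbf h}_i^{\rm T}{\mathbf w}/\sigma_z(\mathbf w)$ through the $\mathbf w$-dependent noise level, obtaining $\nabla_{\mathbf w}g_i=\sigma_z^{-1}\bigl({\mathbf I}-\sigma_e^2\sigma_z^{-2}{\mathbf w}{\mathbf w}^{\rm T}\bigr){\mathbf h}_i$, and then assemble the FIM as a sum of per-observation rank-one terms $\lambda_{ii}{\mathbf m}_i{\mathbf m}_i^{\rm T}$; your algebra for $\lambda_{ii}$ checks out against (\ref{lambda_ii}). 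The paper instead reparametrizes via ${\mathbf v}={\mathbf w}/\sigma_z$, computes the standard probit FIM ${\mathbf J}({\mathbf v})=\sigma_z^2{\mathbf H}{\mathbf \Lambda}{\mathbf H}^{\rm T}$ in the new coordinates, and transports it back to $\mathbf w$ with the CRLB transformation formula, using the Jacobian $\partial{\mathbf g}({\mathbf v})/\partial{\mathbf v}=\sigma_z({\mathbf I}+\tfrac{\sigma_e^2}{\sigma_n^2}{\mathbf w}{\mathbf w}^{\rm T})$ and a Sherman--Morrison inversion to produce the factor ${\mathbf I}-\tfrac{\sigma_e^2}{\sigma_z^2}{\mathbf w}{\mathbf w}^{\rm T}$. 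The two computations are equivalent (your gradient of $g_i$ is in effect the inverse of the paper's Jacobian acting on ${\mathbf h}_i/\sigma_z$), but your version is more self-contained: it needs neither the transformation lemma nor Sherman--Morrison, and it explicitly verifies that ${\mathbf M}$ has full row rank (smallest eigenvalue $\sigma_n^2/\sigma_z^2$ of the rank-one correction, plus full row rank of ${\mathbf H}$), a nonsingularity check the paper leaves implicit when it writes $({\mathbf M}{\mathbf \Lambda}{\mathbf M}^{\rm T})^{-1}$. What the paper's route buys is the reuse of the well-known perturbation-free probit FIM and a direct link to the variable ${\mathbf v}$ that drives the convex reformulation in Section IV. One small caveat on your write-up: the identity $-{\rm E}[\nabla^2_{\mathbf w}l]={\rm E}[\nabla_{\mathbf w}l\,(\nabla_{\mathbf w}l)^{\rm T}]$ does require the regularity condition ${\rm E}_{\mathbf y}[\nabla_{\mathbf w}l]={\mathbf 0}$, which you assert from smoothness and positivity of $\Phi$; the paper verifies it by an explicit expectation over the two outcomes, and it would be worth one line to do the same.
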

\begin{proof}
The proof is postponed to Appendix \ref{CRB_appendix}.
\end{proof}

For simplicity, let $\mathbf J$ denote the FIM instead of ${\mathbf J}(\mathbf{w})$ in the following text. Two extreme cases will be discussed. The first case corresponds to the setting of perturbation free. Then ${\mathbf M}={\mathbf H}$ and $\sigma_z^2= \sigma_n^2$. The FIM is degenerated to ${\mathbf J}= {\mathbf H}{\mathbf \Lambda}{\mathbf H}^{\rm T}$, which is consistent with \cite{Tsakonas_SpaC}. The second case corresponds to the setting of additive noise free. Hence $\mathbf M$ is rank deficient and ${\mathbf J}$ is singular, implying that there exists no finite variance unbiased estimator \cite{Singular_fim}. We can also see it in the reduced model
\begin{align}
{\mathbf y}={\rm {sign}}\left({{({\mathbf H}+{\mathbf E})}^{\rm T}}\mathbf{w}\right).\label{Spec_model2}
\end{align}
For an estimator $\hat{\mathbf w}$, its scaled version $k\hat{\mathbf w}$ satisfies (\ref{Spec_model2}) for all $k>0$. This result demonstrates that the magnitude information of $\mathbf w$ is lost from sign measurements. Therefore, the additive noise $\mathbf n$ is necessary for the estimation in that it provides a dynamic bias for the sign function \cite{MarkDaven}. We always assume that $\sigma_n^2$ is nonzero.

We will then discuss how the multiplicative noise and the additive noise affect the CRLB on the MSE. The sign measurement can be viewed as a nonlinear system, thus its performance can be enhanced by the presence of optimized random noise \cite{DeWeese, Douglass, Levin}. In the model (\ref{Gaussian_model}), there may exist optimal variances of multiplicative noise and additive noise that minimize the CRLB on the MSE. Viewing $\sigma_n^2$ and $\sigma_e^2$ as variables,  we will discuss three cases in the following subsections, corresponding to the situations in which the variance of the equivalent noise $\sigma_z^2$, of the multiplicative noise $\sigma_e^2\|{\mathbf w}\|_2^2$, or of the additive noise $\sigma_n^2$ is fixed, respectively.

\subsection{The Case of Equivalent Noise Fixed}
Suppose that we have two models. One is model (\ref{Gaussian_model}), the corresponding estimator and the FIM are denoted as $\hat{\mathbf w}({\mathbf y},{\mathbf H},\sigma_e^2,\sigma_n^2)$ and $\mathbf J$, respectively. The other is model (\ref{Determin_matrix}). We use $\hat{\mathbf w}({\mathbf y},{\mathbf H},0,\sigma_z^2)$ to denote the estimator with the FIM $\tilde{\mathbf J}$ in this situation. One may define $\gamma={\sigma_e^2\|{\mathbf w}\|_2^2}/{\sigma_n^2}$ to denote the ratio of the variance of the multiplicative noise to that of the additive noise. Let $\tilde{\lambda}_i$ denote the $i$th largest eigenvalue of the FIM $\tilde{\mathbf J}$, then the following result is obtained.
\begin{proposition}\label{Pro_ineq}
The multiplicative noise exacerbates the performance of estimation when the variance of equivalent noise is fixed. In the MSE sense, we have the following inequality
\begin{align}
\frac{{\gamma^2+2\gamma}}{\tilde{\lambda}_1}\leq{\rm {tr}}({\mathbf J}^{-1})-{\rm {tr}}(\tilde{\mathbf J}^{-1})\leq \frac{{\gamma^2+2\gamma}}{\tilde{\lambda}_p}.\label{inequ}
\end{align}
\end{proposition}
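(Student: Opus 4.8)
The plan is to exploit the observation that, since the equivalent noise variance $\sigma_z^2$ is held fixed, the diagonal matrix $\mathbf{\Lambda}$ in (\ref{lambda_ii}) is \emph{identical} for the two models, so that $\mathbf{J}$ and $\tilde{\mathbf{J}}$ differ only through the replacement of $\mathbf{H}$ by $\mathbf{M}$. Writing $\mathbf{A}\triangleq\mathbf{I}-\frac{\sigma_e^2}{\sigma_z^2}\mathbf{w}\mathbf{w}^{\rm T}$, equation (\ref{Def_M}) gives $\mathbf{M}=\mathbf{A}\mathbf{H}$, and since $\mathbf{A}$ is symmetric, $\mathbf{J}=\mathbf{M}\mathbf{\Lambda}\mathbf{M}^{\rm T}=\mathbf{A}\tilde{\mathbf{J}}\mathbf{A}$. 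Because $\mathbf{H}$ is full row rank and $\mathbf{\Lambda}\succ\mathbf{0}$, the matrix $\tilde{\mathbf{J}}$ is positive definite, and $\mathbf{A}$ (whose eigenvalues are $1$ and $1-\sigma_e^2\|\mathbf{w}\|_2^2/\sigma_z^2>0$) is invertible, so $\mathbf{J}$ is positive definite and $\mathbf{J}^{-1}=\mathbf{A}^{-1}\tilde{\mathbf{J}}^{-1}\mathbf{A}^{-1}$.

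Next I would compute $\mathbf{A}^{-1}$ explicitly via the Sherman--Morrison formula. Using $\sigma_z^2=\sigma_e^2\|\mathbf{w}\|_2^2+\sigma_n^2$ together with $\gamma=\sigma_e^2\|\mathbf{w}\|_2^2/\sigma_n^2$ one obtains the identity $\sigma_e^2\|\mathbf{w}\|_2^2/\sigma_z^2=\gamma/(\gamma+1)$, which collapses the Sherman--Morrison expression to the clean form $\mathbf{A}^{-1}=\mathbf{I}+\gamma\,\mathbf{u}\mathbf{u}^{\rm T}$, where $\mathbf{u}\triangleq\mathbf{w}/\|\mathbf{w}\|_2$ is a unit vector. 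Squaring and using $\mathbf{u}^{\rm T}\mathbf{u}=1$ gives $\mathbf{A}^{-2}=\mathbf{I}+(\gamma^2+2\gamma)\,\mathbf{u}\mathbf{u}^{\rm T}$.

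Finally, by the cyclic property of the trace and the symmetry of $\mathbf{A}^{-1}$, ${\rm tr}(\mathbf{J}^{-1})={\rm tr}(\mathbf{A}^{-2}\tilde{\mathbf{J}}^{-1})={\rm tr}(\tilde{\mathbf{J}}^{-1})+(\gamma^2+2\gamma)\,\mathbf{u}^{\rm T}\tilde{\mathbf{J}}^{-1}\mathbf{u}$, so ${\rm tr}(\mathbf{J}^{-1})-{\rm tr}(\tilde{\mathbf{J}}^{-1})=(\gamma^2+2\gamma)\,\mathbf{u}^{\rm T}\tilde{\mathbf{J}}^{-1}\mathbf{u}$, which is nonnegative for $\gamma>0$ — precisely the claim that the multiplicative noise worsens the MSE. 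For the two-sided bound I would invoke the Rayleigh--Ritz characterization: for the unit vector $\mathbf{u}$, the quadratic form $\mathbf{u}^{\rm T}\tilde{\mathbf{J}}^{-1}\mathbf{u}$ lies between the smallest and largest eigenvalues of $\tilde{\mathbf{J}}^{-1}$, namely $1/\tilde{\lambda}_1$ and $1/\tilde{\lambda}_p$; multiplying through by $\gamma^2+2\gamma>0$ yields (\ref{inequ}). The argument is entirely elementary, so there is no real obstacle; the only steps that require a little care are verifying the algebraic identity $\sigma_e^2\|\mathbf{w}\|_2^2/\sigma_z^2=\gamma/(\gamma+1)$ that produces the coefficient $\gamma$ after inversion, and noting that the bounds become tight exactly when $\mathbf{u}$ aligns with the extreme eigenvector of $\tilde{\mathbf{J}}$.
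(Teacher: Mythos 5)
Your proof is correct, and it shares the paper's starting point — writing $\mathbf{J}=\mathbf{N}\tilde{\mathbf{J}}\mathbf{N}$ with $\mathbf{N}=\mathbf{I}-\frac{\sigma_e^2}{\sigma_z^2}\mathbf{w}\mathbf{w}^{\rm T}$ (your $\mathbf{A}$), inverting, and comparing traces — but the final bounding step is genuinely different. The paper diagonalizes both $\tilde{\mathbf{J}}$ and $\mathbf{N}$, forms ${\rm tr}(\mathbf{Q}\mathbf{T})$ with $\mathbf{T}=\mathbf{\Delta}_{\mathbf N}^{-2}={\rm diag}(1,\dots,1,(1+\gamma)^2)$, and invokes Lasserre's trace inequality, which bounds ${\rm tr}(\mathbf{Q}\mathbf{T})$ by the sums of products of similarly and oppositely ordered eigenvalues; pairing the single eigenvalue $(1+\gamma)^2$ of $\mathbf{T}$ with $1/\tilde{\lambda}_p$ or $1/\tilde{\lambda}_1$ reproduces (\ref{inequ}). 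You instead compute $\mathbf{N}^{-2}=\mathbf{I}+(\gamma^2+2\gamma)\mathbf{u}\mathbf{u}^{\rm T}$ exactly via Sherman--Morrison (the identity $\mathbf{N}^{-1}=\mathbf{I}+\frac{\sigma_e^2}{\sigma_n^2}\mathbf{w}\mathbf{w}^{\rm T}$ you rely on is the same one the paper uses in (\ref{bianhuan}) of Appendix B), obtaining the exact identity ${\rm tr}(\mathbf{J}^{-1})-{\rm tr}(\tilde{\mathbf{J}}^{-1})=(\gamma^2+2\gamma)\,\mathbf{u}^{\rm T}\tilde{\mathbf{J}}^{-1}\mathbf{u}$ and then applying Rayleigh--Ritz. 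Your route is more elementary (no external trace inequality needed), gives a closed-form expression for the gap rather than only two-sided bounds, and makes the tightness condition transparent: equality holds exactly when $\mathbf{w}$ is aligned with an extreme eigenvector of $\tilde{\mathbf{J}}$, information the paper's argument does not surface.
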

\begin{proof}
The proof is postponed to Appendix C.
\end{proof}

Proposition \ref{Pro_ineq} demonstrates that the minimum MSE is achieved at $\sigma_e^2=0$ when $\sigma_z^2$ is fixed. It is also shown that when the variance of the multiplicative noise is much smaller than that of the additive noise, the lower and the upper bounds of ${\rm {tr}}({\mathbf J}^{-1})-{\rm {tr}}(\tilde{\mathbf J}^{-1})$ are proportional to $\gamma$. Whereas when the variance of the multiplicative noise is larger than that of the additive noise, the two bounds are proportional to $\gamma^2$. Therefore, the performance of the estimator deteriorates dramatically with the increase of the multiplicative noise when the variance of equivalent noise is fixed.

If back to the unquantized problem $\mathbf{y}={{({\mathbf H}+{\mathbf E})}^{\rm T}}\mathbf{w}+\mathbf{n}$, one will draw a contrary conclusion. We define two unquantized problems which are the same with the above situation. Using the same notation and assuming the variance of the equivalent noise is equal, the result is contrary to (\ref{inequ}). According to (12) of \cite{Yujie}, one has
\begin{align}
{\rm {tr}}(\tilde{\mathbf J}^{-1})\geq{\rm {tr}}({\mathbf J}^{-1}).\notag
\end{align}
This result demonstrates that when the measurement is unquantized and the variance of the equivalent noise is fixed, noise coupled the parameter information can help us to estimate the parameter.

\subsection{The Case of Multiplicative Noise Fixed}

Now we discuss the case in which the variance of the multiplicative noise is fixed. Because $\mathbf w$ is  deterministic, $\sigma_e^2$ can be viewed as a variable instead of $\|{\mathbf w}\|_2^2\sigma_e^2$. We consider two extreme cases. One is that $\sigma_n^2$ is zero. In this case, we have known that the corresponding FIM is singular, thus there does not exist a finite unbiased estimator for $\mathbf w$. In the other case, when $\sigma_n^2$ tends to infinity, according to (\ref{lambda_ii}), one has
\begin{align}
\lim_{\sigma_n^2 \rightarrow \infty}\lambda_{ii}=\lim_{\sigma_n^2 \rightarrow \infty}\frac{2}{{\pi \sigma_z^2}}{{\rm e}^{-\frac{{{\left( \mathbf{h}_{i}^{\rm{T}}\mathbf{w} \right)}^{2}}}{{\sigma _z^2}}}}=0,\notag
\end{align}
which implies that the CRLB on the MSE tends to infinity as $\sigma_n^2$ gradually increases. Except for these two cases, the CRLB on the MSE is finite. These results show that the additive noise has two opposing effects. On the one hand, it provides variant thresholds for the sign measurement, which is beneficial to the estimation. On the other hand, the additive noise increases the variance of the estimation \cite{Onkar1}. Therefore, there may exist an optimal variance of the additive noise which balances these opposing effects and minimizes the CRLB. The above analysis will be substantiated by an example later.

\subsection{The Case of Additive Noise Fixed}

When $\sigma_e^2$ is zero, the CRLB on the MSE is finite. Whereas when $\sigma_e^2$ tends to infinity, according to (\ref{lambda_ii}), one has
\begin{align}
\lim_{\sigma_e^2 \rightarrow \infty}\lambda_{ii}=\lim_{\sigma_e^2 \rightarrow \infty}\frac{2}{{\pi \sigma_z^2}}{\rm e}^{-\frac{{\left({\mathbf h}_i^{\rm T}{\mathbf w} \right)}^2}{\sigma _z^2}}=0.\notag
\end{align}
Thus the FIM tends to singular and the CRLB on the MSE tends to infinity. Intuitively, one may expect that the optimal variance of the multiplicative noise is zero. However, we will show that this is indeed not always true. There may exist an optimal nonzero variance of the multiplicative noise, as we will show in the following example.

\subsection{An Example}

An example is now illustrated to verify our analysis on all cases. Consider a scalar parameter estimation problem and the mean of the sensing matrix is ${\mathbf H}=[1,1,\cdots,1]$. According to (\ref{mse_crlb}), the CRLB is
\begin{align}
{\rm CRLB}(w)=\frac{2\pi \sigma_z^2}{N}\left(1+\frac{\sigma_e^2w^2}{\sigma_n^2}\right)^2\Phi \left(-\frac{w}{\sigma_z}\right)\Phi \left(\frac{w}{\sigma_z}\right){\rm e}^{\frac{w^2}{\sigma_z^2}}.\label{CRLB_minimize}
\end{align}
We wish to minimize the CRLB (\ref{CRLB_minimize}) in three cases, respectively. It is obvious that the minimum CRLB is attained at $\sigma_e^2=0$ when $\sigma_z^2$ is fixed. When either $\sigma_n^2$ or $\sigma_e^2$ is fixed, it is difficult to exactly analyze (\ref{CRLB_minimize}). Fortunately, by using the Chernoff bound for the CDF \cite{Proakis}
\begin{align}\label{Cher}
 \Phi \left(-\frac{w}{\sigma_z}\right)\Phi \left(\frac{w}{\sigma_z}\right)\leq \frac{1}{4}{\rm e}^{-\frac{w^2}{2\sigma_z^2}},
\end{align}
one can find an upper bound for ${\rm CRLB}(w)$ by substituting (\ref{Cher}) in (\ref{CRLB_minimize})
\begin{align}
{\rm CRLB}(w)&\leq\frac{\pi \sigma_z^2}{2N}\left(1+\frac{\sigma_e^2w^2}{\sigma_n^2}\right)^2{\rm e}^{\frac{w^2}{2\sigma_z^2}}.\label{chernoff_minimize}
\end{align}
In fact, the Chernoff bound is a very tight approximation for finding the optimal value of $\sigma_n^2$ or $\sigma_e^2$, which will be shown later. By substituting (\ref{Var_z}) in (\ref{chernoff_minimize}) and dropping out the constant coefficient items, we define the natural logarithm of the right hand side of (\ref{chernoff_minimize}) as
\begin{align}
f(\sigma_n^2,\sigma_e^2)\triangleq 3\log(\sigma_n^2+\sigma_e^2w^2)+\frac{w^2}{2(\sigma_n^2+\sigma_e^2w^2)}-2\log\sigma_n^2. \label{cher_approx}
\end{align}
When $\sigma_e^2$ is fixed, we minimize (\ref{cher_approx}) with respect to $\sigma_n^2$. The optimal variance of the additive noise is approximated by
\begin{align}\label{sigman_opt}
{}_{\rm opt}{\sigma}_n^2 \approx {}_{\rm app}{\sigma}_n^2= \frac{w^2}{2}\left(\sqrt{9\sigma_e^4+\sigma_e^2+\frac{1}{4}}+\frac{1}{2}+\sigma_e^2\right).
\end{align}
This means that there exists an optimal additive noise that matches the multiplicative noise and the unknown parameter.
Whereas when the variance of the additive noise $\sigma_n^2$ is fixed, the optimal ${}_{\rm opt}{\sigma}_e^2$ is
\begin{align}\label{sigmae_opt}
    {}_{\rm opt}{\sigma}_e^2\approx {}_{\rm app}{\sigma}_e^2=
    \begin{cases}
       \frac{1}{6}-\frac{\sigma_n^2}{w^2}, &{\rm if}\; \frac{\sigma_n^2}{w^2}\leq \frac{1}{6};\\
       0,&{\rm otherwise}.
   \end{cases}
\end{align}
It seems unreasonable that the multiplicative noise may improve the performance of the estimation. By carefully studying the condition of (\ref{sigmae_opt}), one can find that the variance of the additive noise $\sigma_n^2$ should be very small compared to $w^2$. In this setting, the randomness introduced by the additive noise is so weak that suitable perturbation may improve the MSE performance. However, to estimate the parameter accurately, a very large number of measurements is needed to ensure enough the fluctuation of the measurements. Thus the ML estimator achieves the CRLB only when the number of measurements is very large. When the variance of additive noise $\sigma_n^2$ is comparable with the energy of parameter $w$, the randomness introduced by the additive noise suffices and ${}_{\rm opt}{\sigma}_e^2$ is zero.

Notice that the above analysis is established for a given $w$. Although the parameter $w$ is unknown in practice, the theoretical analysis is still useful in three aspects. First, it gives us an insight into the relationship between the additive noise and the perturbation. Second, the theoretical MSE performance limits for unbiased estimators is provided by choosing the optimal ${}_{\rm opt}{\sigma}_n^2$ or ${}_{\rm opt}{\sigma}_e^2$. Third, one may extend the above ideas to the case of unknown parameter $w$. In this case, one may optimize the Bayesian CRLB \cite{Balkan} or the worst CRLB \cite{Oppen} instead if some prior information is known.

\section{ML Estimation via Convex Optimization}

At first sight, one wish the ML estimation problem (\ref{gen_p}) could be solved by steepest descent or Newton's method. However, problem (\ref{gen_p}) is non-convex. Hence the gradient and Hessian based numerical algorithms may not be guaranteed to converge to the optimal point. Moreover, direct solution can not provide us more insight into the problem itself. Fortunately, (\ref{gen_p}) can be reformulated as a convex optimization problem. By introducing a new variable
\begin{align}
{\mathbf v} = \frac{\mathbf w}{\sqrt{\|\mathbf w\|_2^2\sigma_e^2+\sigma_n^2}}, \label{tansform}
\end{align}
we transform the original optimization problem (\ref{gen_p}) to another one with respect to $\mathbf v$.
According to (\ref{tansform}), one has
\begin{align}\label{cons_norm_w}
\|{\mathbf v}\|_2< \frac{1}{\sigma_e}.
\end{align}
As long as ${\mathbf v}$ satisfies the inequality (\ref{cons_norm_w}), the relationship between $\mathbf w$ and ${\mathbf v}$ is a one to one mapping. Consequently, the original problem (\ref{gen_p}) can be conquered  by first solving an equivalent convex optimization problem,
\begin{subequations}\label{OPT}
\begin{align}
&\underset{\mathbf v}{\operatorname{minimize}}~-\sum_{i=1}^N{\log}\Phi({y_i{\mathbf h}_{i}^{\rm{T}}{\mathbf v}})\label{OBJ_OPT}\\
&{\operatorname{subject~to}}~\|{\mathbf v}\|_2^2<\frac{1}{\sigma_e^2},\label{Con_OPT}
\end{align}
\end{subequations}
and then finding the optimal point by
\begin{align}\label{invtrans}
{\mathbf w}=\frac{\sigma_n}{{\sqrt{1-\sigma_e^2\|{{\mathbf v}}\|_2^2}}}{\mathbf v}.
\end{align}
\begin{proposition}\label{opt_equva}
The optimization problem (\ref{gen_p}) is equivalent to problem (\ref{OPT}).
\end{proposition}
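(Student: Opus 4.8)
The plan is to establish the equivalence by exhibiting an explicit change of variables between the feasible sets of the two problems and showing it preserves objective values. First I would observe that the map $\mathbf{w}\mapsto \mathbf{v}$ defined in (\ref{tansform}) sends $\mathbb{R}^p$ into the open ball $\{\mathbf{v}:\|\mathbf{v}\|_2<1/\sigma_e\}$, since $\|\mathbf{v}\|_2^2 = \|\mathbf{w}\|_2^2/(\|\mathbf{w}\|_2^2\sigma_e^2+\sigma_n^2) = 1/(\sigma_e^2 + \sigma_n^2/\|\mathbf{w}\|_2^2) < 1/\sigma_e^2$, with the bound strict because $\sigma_n^2>0$. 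Conversely, given any $\mathbf{v}$ with $\|\mathbf{v}\|_2<1/\sigma_e$, I would check that (\ref{invtrans}) produces a well-defined $\mathbf{w}\in\mathbb{R}^p$ (the denominator $\sqrt{1-\sigma_e^2\|\mathbf{v}\|_2^2}$ is real and positive under the constraint).

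The key step is to verify that these two maps are mutual inverses and that they preserve the objective. For the inverse relation, substituting (\ref{invtrans}) into (\ref{tansform}) and simplifying $\|\mathbf{w}\|_2^2 = \sigma_n^2\|\mathbf{v}\|_2^2/(1-\sigma_e^2\|\mathbf{v}\|_2^2)$, one finds $\|\mathbf{w}\|_2^2\sigma_e^2+\sigma_n^2 = \sigma_n^2/(1-\sigma_e^2\|\mathbf{v}\|_2^2)$, whose square root is $\sigma_n/\sqrt{1-\sigma_e^2\|\mathbf{v}\|_2^2}$; dividing $\mathbf{w}$ by this recovers $\mathbf{v}$, and the other composition is analogous. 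For objective preservation, the crucial identity is $\mathbf{h}_i^{\rm T}\mathbf{v} = \mathbf{h}_i^{\rm T}\mathbf{w}/\sqrt{\|\mathbf{w}\|_2^2\sigma_e^2+\sigma_n^2}$, which holds by linearity directly from (\ref{tansform}); hence the summand $\log\Phi(y_i\mathbf{h}_i^{\rm T}\mathbf{v})$ in (\ref{OBJ_OPT}) equals the summand in (\ref{gen_p}) term by term. Since the feasible region of (\ref{OPT}) is exactly the image of $\mathbb{R}^p$ under the bijection and the objectives agree pointwise under it, the infima coincide and optimal points are in one-to-one correspondence via (\ref{invtrans}).

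I would close by noting that problem (\ref{OPT}) is indeed convex: the constraint set $\{\|\mathbf{v}\|_2^2<1/\sigma_e^2\}$ is an open convex ball, and each term $-\log\Phi(y_i\mathbf{h}_i^{\rm T}\mathbf{v})$ is convex in $\mathbf{v}$ because $-\log\Phi$ is a convex function on $\mathbb{R}$ (its second derivative is nonnegative, a standard fact about the log-concavity of the Gaussian CDF) composed with an affine map. This is what justifies the phrase ``equivalent convex optimization problem'' in the surrounding text.

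The main obstacle I anticipate is purely bookkeeping: carrying the algebra through the composition of (\ref{tansform}) and (\ref{invtrans}) carefully enough to confirm the strict inequality in (\ref{cons_norm_w}) is never violated and that no sign ambiguity arises from the square roots (both are taken positive, consistent with $\sigma_n>0$). There is no conceptual difficulty — the equivalence is a diffeomorphic reparametrization — so the write-up should be short, with the only care point being the boundary behavior as $\|\mathbf{v}\|_2\to 1/\sigma_e$, which corresponds to $\|\mathbf{w}\|_2\to\infty$ and is correctly excluded.
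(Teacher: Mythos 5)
Your proposal is correct and is exactly the argument the paper intends: the surrounding text introduces the change of variables (\ref{tansform})--(\ref{invtrans}) and asserts the one-to-one correspondence between $\mathbb{R}^p$ and the open ball (\ref{Con_OPT}), and the paper simply omits the routine verification (``the proof is direct''), which you have carried out correctly, including the strictness of the bound $\|\mathbf{v}\|_2<1/\sigma_e$ from $\sigma_n^2>0$ and the term-by-term preservation of the objective. No gaps.
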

\begin{proof}
The proof is direct and is not included.
\end{proof}

The constraint set (\ref{Con_OPT}) is an open ball. Thus if the optimal point of the problem (\ref{OPT}) exists, it must be an interior point of the constraint set. For the uniqueness of the optimal point of the objective function (\ref{OBJ_OPT}), we have the following result.
\begin{proposition}\label{uniq}
The objective function of problem (\ref{OPT}) is strictly convex, thus the optimal point of problem (\ref{OPT}) is unique if it exists.
\end{proposition}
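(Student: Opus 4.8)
The plan is to establish that the Hessian of the objective $f(\mathbf v)\triangleq-\sum_{i=1}^{N}\log\Phi(y_i\mathbf h_i^{\rm T}\mathbf v)$ is positive definite on all of $\mathbb R^{p}$; strict convexity on the open ball \eqref{Con_OPT} then follows (it being a convex subset), and uniqueness of the minimizer, when one exists, is the standard consequence that a strictly convex function cannot attain its infimum at two distinct points of a convex set.

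First I would reduce everything to a one-dimensional fact: the strict log-concavity of the standard Gaussian CDF. Writing $\phi(t)=\tfrac{1}{\sqrt{2\pi}}{\rm e}^{-t^2/2}$ for its density and $g(t)\triangleq-\log\Phi(t)$, a direct differentiation using $\phi'(t)=-t\phi(t)$ gives $g'(t)=-\phi(t)/\Phi(t)$ and
\[
g''(t)=\frac{\phi(t)}{\Phi(t)^2}\bigl(t\Phi(t)+\phi(t)\bigr),
\]
so the whole question is whether $h(t)\triangleq t\Phi(t)+\phi(t)$ is positive. Here $h'(t)=\Phi(t)+t\phi(t)+\phi'(t)=\Phi(t)>0$, so $h$ is strictly increasing; and the elementary Gaussian tail bound $|t|\,\Phi(t)\le\phi(t)$ for $t<0$ gives $h(t)\ge 0$ for all $t\le 0$. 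A strictly increasing function that is nonnegative on $(-\infty,0]$ is strictly positive everywhere (for any $t$, pick $t'<\min\{t,0\}$ and use $h(t)>h(t')\ge 0$), so $g''(t)>0$ for every $t\in\mathbb R$. I expect this to be the main obstacle — not because it is deep, but because it is the only step that is not pure bookkeeping; in particular one must be a little careful to pin down the sign of $h$ near $-\infty$.

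With $g''>0$ in hand the rest is routine. Each summand $g(y_i\mathbf h_i^{\rm T}\mathbf v)$ is a strictly convex function composed with an affine map, hence convex; and since $y_i^2=1$, the Hessian is $\nabla^2 f(\mathbf v)=\sum_{i=1}^{N}g''(y_i\mathbf h_i^{\rm T}\mathbf v)\,\mathbf h_i\mathbf h_i^{\rm T}$, a sum of rank-one matrices with strictly positive weights. Invoking the full-row-rank (identifiability) assumption on $\mathbf H$ recalled earlier, the columns $\mathbf h_1,\dots,\mathbf h_N$ span $\mathbb R^{p}$, so for any $\mathbf u\neq\mathbf 0$ not all of $\mathbf h_i^{\rm T}\mathbf u$ vanish and $\mathbf u^{\rm T}\nabla^2 f(\mathbf v)\,\mathbf u=\sum_{i=1}^{N}g''(y_i\mathbf h_i^{\rm T}\mathbf v)(\mathbf h_i^{\rm T}\mathbf u)^2>0$; thus $\nabla^2 f\succ\mathbf 0$ and $f$ is strictly convex on $\mathbb R^{p}$, a fortiori on the convex open ball \eqref{Con_OPT}. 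Finally, if $\mathbf v_1\neq\mathbf v_2$ were both optimal, the midpoint $\tfrac12(\mathbf v_1+\mathbf v_2)$ would lie in the ball and satisfy $f\!\left(\tfrac12(\mathbf v_1+\mathbf v_2)\right)<\tfrac12 f(\mathbf v_1)+\tfrac12 f(\mathbf v_2)$, contradicting optimality; hence the optimal point is unique whenever it exists.
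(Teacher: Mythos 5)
Your proof is correct and follows essentially the same route as the paper's: both compute the Hessian as $\sum_{i}\beta_i\,\mathbf h_i\mathbf h_i^{\rm T}$ with $\beta_i>0$ and reduce everything to the inequality $t\Phi(t)+\tfrac{1}{\sqrt{2\pi}}{\rm e}^{-t^2/2}>0$ (equivalently $x\Phi(-x)<\tfrac{1}{\sqrt{2\pi}}{\rm e}^{-x^2/2}$), which the paper simply cites while you derive it via the monotonicity of $h$. The only substantive addition on your side is making explicit the full-row-rank assumption on $\mathbf H$ needed to upgrade positive semidefiniteness of the rank-one sum to positive definiteness, a point the paper leaves implicit.
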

\begin{proof}
The proof is postponed to Appendix D.
\end{proof}

Based on Proposition \ref{opt_equva} and Proposition \ref{uniq}, the following proposition provides a necessary and sufficient condition of the existence of the optimal point of the original ML estimation problem (\ref{gen_p}).

\begin{proposition}\label{equ}
The optimal point of problem (\ref{gen_p}) exists if and only if the optimal point ${\mathbf v}_{u}^*$ of the unconstrained convex optimization problem
\begin{align}
&\underset{{\mathbf v}\in \mathbb{R}^p}{\operatorname{minimize}}~-\sum_{i=1}^N{\log}\Phi({y_i{\mathbf h}_{i}^{\rm{T}}{\mathbf v}})\label{UNOPT}
\end{align}
satisfies the constraint (\ref{Con_OPT}).
\end{proposition}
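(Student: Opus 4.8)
The plan is to deduce everything from two facts already in hand: problem~(\ref{gen_p}) is equivalent to problem~(\ref{OPT}) (Proposition~\ref{opt_equva}), and the objective $g(\mathbf{v})\triangleq-\sum_{i=1}^N\log\Phi(y_i{\mathbf h}_i^{\rm T}{\mathbf v})$ is strictly convex (Proposition~\ref{uniq}), so it has at most one minimizer over any convex set. By the equivalence it suffices to prove the claim with problem~(\ref{gen_p}) replaced by problem~(\ref{OPT}); write $B\triangleq\{\mathbf v:\|\mathbf v\|_2^2<1/\sigma_e^2\}$ for the (open) feasible set.

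For the ``if'' direction, suppose the minimizer $\mathbf v_u^*$ of~(\ref{UNOPT}) exists and lies in $B$. Then $\mathbf v_u^*$ is feasible for~(\ref{OPT}) and minimizes $g$ over all of $\mathbb R^p$, hence a fortiori over $B$; so it is an optimal point of~(\ref{OPT}), and~(\ref{gen_p}) attains its minimum.

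For the ``only if'' direction, suppose~(\ref{OPT}) has an optimal point $\mathbf v^*$. Since $B$ is open, $\mathbf v^*$ is an interior point of $B$ (as already observed after~(\ref{OPT})), so some ball around $\mathbf v^*$ is contained in $B$, and on it $g(\mathbf v)\ge g(\mathbf v^*)$; that is, $\mathbf v^*$ is a local minimizer of the convex, differentiable function $g$ on $\mathbb R^p$. Equivalently $\nabla g(\mathbf v^*)=\mathbf 0$, which is precisely the optimality condition for the unconstrained problem~(\ref{UNOPT}); hence $\mathbf v_u^*$ exists, and by strict convexity (uniqueness of the minimizer) $\mathbf v_u^*=\mathbf v^*\in B$, so it satisfies~(\ref{Con_OPT}).

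The content is routine convex analysis; the one step deserving care is the ``only if'' direction, where the openness of the constraint set is what lets a constrained optimum be promoted to an unconstrained one, and strict convexity from Proposition~\ref{uniq} then pins it down as the unique $\mathbf v_u^*$. I do not expect any genuine obstacle beyond recording these steps precisely.
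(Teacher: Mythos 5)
Your proposal is correct and follows essentially the same route as the paper: invoke the equivalence of (\ref{gen_p}) and (\ref{OPT}) from Proposition \ref{opt_equva}, then use strict convexity of the objective together with the openness of the ball (\ref{Con_OPT}) to identify existence of a constrained optimum with ${\mathbf v}_u^*$ lying in the ball. The paper states this more tersely; your write-up merely makes explicit the step where openness promotes a constrained optimum to a stationary, hence global unconstrained, minimizer.
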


\begin{proof}
According to Proposition \ref{opt_equva}, the optimal point of the original ML estimation problem (\ref{gen_p}) exists if and only if problem (\ref{OPT}) has an optimal point.
Considering that the objective function of (\ref{OPT}) is strictly convex and the constraint (\ref{Con_OPT}) is an open ball, the existence of optimal point of (\ref{OPT}) is equivalent to that ${\mathbf v}_{u}^*$ satisfies the constraint (\ref{Con_OPT}). Then the result is established.
\end{proof}

Therefore, we can solve at first the unconstrained optimization problem (\ref{UNOPT}). Then we check whether ${\mathbf v}_{u}^*$ satisfies the constraint (\ref{Con_OPT}) to determine whether the original ML estimation problem (\ref{gen_p}) has an optimal point.

It is shown that (\ref{gen_p}) does not have an optimal point in some cases, in which we say that there ``exists'' an optimal point with infinite norm. In order to provide an finite estimation, we adopt a norm limit operation.
We will project the optimal point (the infinite case included) onto a set ${\mathcal W}=\{{\mathbf w}|\|{\mathbf w}\|_2\leq R_w\}$, if its norm is larger than a threshold $R_w$, where $R_w$ is much larger than the norm of true parameter ${\bf w}_0$. Because ${\mathcal W}$ is a closed ball, it can be proved that the projection of ${\bf w}$ onto ${\mathcal W}$ is equivalent to the projection of ${\bf v}$ onto ${\mathcal V}= \{{\mathbf v}|\|\mathbf v\|_2\leq {R_w}/{{\sqrt{R_w^2\sigma_e^2+\sigma_n^2}}}\}$.

In practice, we first get ${\mathbf v}_{p}^*=\Pi_{\mathcal V}({\mathbf v}_{u}^*)$, where $\Pi_{\mathcal V}$ represents the projection onto the parameter set ${\mathcal V}$. According to (\ref{invtrans}), we then obtain the corresponding ${\mathbf w}_{p}^*$ which satisfies ${\mathbf w}_{p}^*\in {\mathcal W}$ and is regarded as the ML estimator
\begin{align}\label{ML}
{\hat{\mathbf w}}_{\rm ML}=\frac{\sigma_n}{{\sqrt{1-\sigma_e^2\|{{\mathbf v}_{p}^*}\|_2^2}}}{\mathbf v}_{p}^*.
\end{align}

We will show that the norm limit operation is almost unnecessary, when the number of measurements $N$ is large enough. As shown in Theorem \ref{ML_con}, the ML estimator is consistent. This means that the optimal point of (\ref{gen_p}) converges to ${\mathbf w}_0$ in probability. Thus the optimal point of (\ref{OPT}) also converges to ${\mathbf v}_0$ in probability, where ${\mathbf v}_0$ is determined by (\ref{tansform}) using ${\mathbf w}_0$. Because ${\mathbf w}_0\in{\mathcal W}$, we can see that ${\mathbf v}_0\in{\mathcal V}$. As a consequence, in the situation of large measurement set, the optimal point ${\mathbf v}_u^*$ of (\ref{UNOPT}) satisfies ${\mathbf v}_u^* \in {\mathcal V}$ with high probability.

\section{Further Discussion}

\subsection{Probability Analysis}

We will analyze the probability that the optimal point of the ML estimation problem exists, in which situation the likelihood function is unimodal.

The previous section has shown that the optimal point ${\mathbf v}_u^*$ of (\ref{UNOPT}) may violate the constraint (\ref{Con_OPT}). Since ${\mathbf v}_u^*$ is a random vector, we may define the probability
\begin{align}
{\rm{P}_{\mathcal V}}={\rm{Pr}}\left[\|{\mathbf v}_u^*\|_2< \frac{1}{\sigma_e}\right],\label{Pr_V}
\end{align}
which is the probability that the original log-likelihood function (\ref{Loglike}) is unimodal. This probability is meaningful in that it sheds light upon the perturbation in a different perspective. On the one hand, the perturbation provides randomness for the sign function. The randomness may make the likelihood function unimodal. On the other hand, ${\rm P}_{\mathcal V}$ may decrease as the strength of perturbation becomes larger through (\ref{Pr_V}). Notice that this probability can be written explicitly as ${\rm P}_{\mathcal V}({\mathbf H},{\mathbf w}_0,\sigma_e^2,\sigma_n^2,N)$, where ${\mathbf w}_0$ denotes the true value of $\mathbf w$. Computing the above probability seems to be hard. However, there may exist an analytic solution in a special case.
\begin{proposition}\label{Pr_b}
Suppose that the true parameter $w_0$ is a scalar, and the mean of the sensing matrix is ${\mathbf H}=[1,1,\cdots,1]$. By defining $k^-=\left\lfloor N\Phi\left(-\frac{\textstyle 1}{\textstyle \sigma_e}\right)\right\rfloor+1$ and ${k^+}=\left\lceil N\Phi\left(\frac{\textstyle 1}{\textstyle \sigma_e}\right)\right\rceil-1$, ${\rm P}_{\mathcal V}$ is obtained by
\begin{align}\label{Cal_pro}
{\rm P}_{\mathcal V}=\sum_{k={k^-}}^{k^+}{N \choose k}\left[\Phi\left(\frac{w_0}{\sigma_z}\right)\right]^k\left[1-\Phi\left(\frac{w_0}{\sigma_z}\right)\right]^{N-k}.
\end{align}
\end{proposition}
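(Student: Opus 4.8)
The plan is to reduce problem (\ref{UNOPT}) to one real variable and then to a combinatorial event about the number of positive measurements. With ${\mathbf H}=[1,\dots,1]$ and a scalar parameter, every ${\mathbf h}_i^{\rm T}{\mathbf v}$ equals the same scalar $v\in\mathbb R$, so the unconstrained objective in (\ref{UNOPT}) becomes $g(v)\triangleq-k\log\Phi(v)-(N-k)\log\Phi(-v)$, where $k\triangleq|{\mathcal I}_+|$ is the number of indices with $y_i=1$. First I would dispose of the boundary cases: if $k=0$ (resp.\ $k=N$) then $g$ is strictly monotone and its infimum is approached only as $v\to-\infty$ (resp.\ $v\to+\infty$), so no finite optimal point ${\mathbf v}_u^*$ exists and the strict constraint (\ref{Con_OPT}) is violated; these are precisely the non-unimodal cases.

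For $1\le k\le N-1$, the function $g$ is strictly convex (a special case of Proposition \ref{uniq}) and coercive, hence has a unique minimizer obtained from $g'(v)=0$. Since the standard normal density $\Phi'$ is even, the stationarity condition $k\,\Phi'(v)/\Phi(v)=(N-k)\,\Phi'(-v)/\Phi(-v)$ collapses to $k\,\Phi(-v)=(N-k)\,\Phi(v)$, i.e.\ $\Phi(v)=k/N$, so $v_u^*=\Phi^{-1}(k/N)$. Because $\sigma_e>0$ and $\Phi$ is strictly increasing, the constraint $|v_u^*|<1/\sigma_e$ is equivalent to $\Phi(-1/\sigma_e)<k/N<\Phi(1/\sigma_e)$, that is, $N\Phi(-1/\sigma_e)<k<N\Phi(1/\sigma_e)$; since $k$ is an integer this holds exactly when $k^-\le k\le k^+$ with $k^-,k^+$ as defined in the statement. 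Note $k^-\ge1$ and $k^+\le N-1$, so this integer range automatically excludes the two boundary cases, consistent with the fact that they never satisfy the constraint.

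It then remains to identify the law of $k$. Under the model (\ref{GeneralPertubation_model}), each $y_i={\rm sign}(w_0+z_i)$ with $z_i\sim{\mathcal N}(0,\sigma_z^2)$ \emph{i.i.d.}, so $y_i=1$ with probability ${\rm Pr}[z_i>-w_0]=\Phi(w_0/\sigma_z)$ independently across $i$; hence $k\sim{\rm Binomial}\!\left(N,\Phi(w_0/\sigma_z)\right)$. Combining the two reductions, ${\rm P}_{\mathcal V}={\rm Pr}[k^-\le k\le k^+]$, which is the claimed binomial sum. The only delicate points are converting the \emph{strict} inequality in (\ref{Con_OPT}) correctly into the floor/ceiling expressions for $k^-$ and $k^+$, and handling $k\in\{0,N\}$ separately; the remaining steps are a routine one-dimensional calculus computation, so I do not anticipate a genuine obstacle.
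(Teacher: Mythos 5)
Your proposal is correct and follows essentially the same route as the paper's proof: reduce (\ref{UNOPT}) to one variable, solve the stationarity condition to get $v_u^*=\Phi^{-1}(k/N)$ with $k$ the number of positive measurements, convert the strict constraint $|v_u^*|<1/\sigma_e$ into the integer range $[k^-,k^+]$, and use that $k$ is Binomial$(N,\Phi(w_0/\sigma_z))$. Your explicit treatment of the degenerate cases $k\in\{0,N\}$ and of coercivity is a small added refinement over the paper's argument, which absorbs those cases into the formula $v_u^*=\Phi^{-1}(k/N)$, but the substance is identical.
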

\begin{proof}
The proof is postponed to Appendix E.
\end{proof}

It is difficult to analyze ${\rm{P}_{\mathcal V}}$ by (\ref{Cal_pro}). When the number of measurements $N$ is large, however, we may compute ${\rm{P}_{\mathcal V}}$ with the normal approximation \cite{Papoulis}
\begin{align}
{\rm{P}_{\mathcal V}}\approx \Phi(\eta^{+})-\Phi(\eta^{-}),\notag
\end{align}
where
\begin{align}
\eta^{\pm}=\frac{k^{\pm}-N\Phi\left(\frac{w_0}{\sigma_z}\right)}{\sqrt{N\Phi\left(\frac{w}{\sigma_z}\right)\Phi\left(-\frac{w}{\sigma_z}\right)}}\approx\frac{\Phi\left(\pm\frac{1}{\sigma_e}\right)-\Phi\left(\frac{w_0}{\sigma_z}\right)}{\sqrt{\Phi\left(\frac{w_0}{\sigma_z}\right)\Phi\left(-\frac{w_0}{\sigma_z}\right)}}\sqrt{N}.\notag
\end{align}

Now we give some results using the normal approximation. Note that the conclusion only applies to the case of large enough measurements.
The normal approximation indicates that ${\rm P}_{\mathcal V}$ is an increasing function of $N$ and $\underset{N\rightarrow\infty}{\lim}{{\rm P}_{\mathcal V}} = 1$. In fact, ${\rm P}_{\mathcal V}$ is not a monotone increasing function of $N$, but the overall trend of ${\rm P}_{\mathcal V}$ is increasing in $N$. When $\sigma_z$ and $N$ are fixed, the higher value of $\sigma_e$, the smaller value of ${{\rm P}_{\mathcal V}}$. This means that for a larger $\sigma_e$, more observations is needed to make sure that ${\rm P}_{\mathcal V}$ achieves a given probability.

In view of the limited number of observations, ${\rm P}_{\mathcal V}$ is always less than $1$. Therefore, for any fixed $N$, the probability that the estimated parameter lies in the boundary of the parameter set ${\mathcal W}=\{{\mathbf w}|\|{\mathbf w}\|_2\leq R_w\}$ is nonzero. Since $R_w$ is larger than the norm of ${\mathbf w}_0$, the estimator has significant bias in the case that $\|\hat{w}_{\rm ML}\|_2=R_w$.
%

\subsection{Effects of Mismodeling on the ML Estimation}

In this subsection, we study the effects on estimation due to the ignorance of perturbation. Then we analyze the performance of both estimators.

Assume that the true data generating model is (\ref{Gaussian_model}), and the corresponding ML estimation is ${\mathbf w}_{t}^*$. If the perturbation is ignored, we obtain the corresponding ML estimation as ${\mathbf w}_{w}^*$ using model (\ref{Determin_matrix}). ${\mathbf w}_{w}^*$ is denoted as the perturbation-ignored estimator. Assume that ${\mathbf w}_{t}^* \in {\mathcal W}$. The relationship between ${\mathbf w}_{t}^*$ and ${\mathbf w}_{w}^*$ is given in the following proposition.
\begin{proposition}
The direction of ${\mathbf w}_{w}^*$ is the same with ${\mathbf w}_{t}^*$, with magnitude scaled. ${\mathbf w}_{w}^*$ and ${\mathbf w}_{t}^*$ satisfy the following relationship,
\begin{align}
{\mathbf w}_{w}^*= \frac{{{\mathbf w}_{t}^*}}{{\sqrt{1+\frac{\sigma_e^2}{\sigma_n^2}\|{\mathbf w}_{t}^*\|_2^2}}}.\label{Relation}
\end{align}
\end{proposition}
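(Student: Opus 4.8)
The plan is to compare the two optimization problems directly, using the change of variables that was already introduced in Section IV. Recall that the perturbation-aware ML estimate $\mathbf{w}_t^*$ is obtained by first solving
\begin{align}
\underset{\mathbf{v}}{\operatorname{minimize}}~-\sum_{i=1}^N \log \Phi(y_i \mathbf{h}_i^{\rm T}\mathbf{v})
\quad\text{subject to}\quad \|\mathbf{v}\|_2^2 < \frac{1}{\sigma_e^2},\notag
\end{align}
calling the minimizer $\mathbf{v}^*$, and then setting $\mathbf{w}_t^* = \frac{\sigma_n}{\sqrt{1-\sigma_e^2\|\mathbf{v}^*\|_2^2}}\,\mathbf{v}^*$ via \eqref{invtrans}. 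On the other hand, the perturbation-ignored estimate solves the probit ML problem for model \eqref{Determin_matrix}, namely
\begin{align}
\mathbf{w}_w^* = \underset{\mathbf{w}}{\operatorname{argmin}}~-\sum_{i=1}^N \log \Phi\!\left(y_i \frac{\mathbf{h}_i^{\rm T}\mathbf{w}}{\sigma_n}\right),\notag
\end{align}
which, absorbing $\sigma_n$, is the same unconstrained convex program in the variable $\mathbf{w}/\sigma_n$ that defines $\mathbf{v}_u^*$ in \eqref{UNOPT}. Since $\mathbf{w}_t^* \in \mathcal{W}$ by hypothesis, the argument at the end of Section IV shows the constraint \eqref{Con_OPT} is inactive, so $\mathbf{v}^* = \mathbf{v}_u^*$; hence $\mathbf{w}_w^* = \sigma_n \mathbf{v}_u^* = \sigma_n \mathbf{v}^*$.

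First I would establish the claim $\mathbf{w}_w^* = \sigma_n \mathbf{v}^*$ cleanly: the key point is that the \emph{only} difference between the two estimators is whether the open-ball constraint $\|\mathbf{v}\|_2 < 1/\sigma_e$ is imposed, and since $\mathbf{w}_t^*\in\mathcal{W}$ forces $\mathbf{v}^*$ into the interior of that ball (using Proposition \ref{uniq} for uniqueness so the constrained and unconstrained minimizers coincide), the two convex programs have the same solution up to the trivial scaling by $\sigma_n$. Second, I would simply substitute. From \eqref{invtrans}, $\mathbf{v}^* = \frac{1}{\sigma_n}\sqrt{1-\sigma_e^2\|\mathbf{v}^*\|_2^2}\,\mathbf{w}_t^*$; I would also need to re-express $\|\mathbf{v}^*\|_2$ in terms of $\|\mathbf{w}_t^*\|_2$, which is exactly the inverse of \eqref{tansform}: $\|\mathbf{v}^*\|_2^2 = \|\mathbf{w}_t^*\|_2^2 / (\|\mathbf{w}_t^*\|_2^2\sigma_e^2 + \sigma_n^2)$, whence $1 - \sigma_e^2\|\mathbf{v}^*\|_2^2 = \sigma_n^2/(\|\mathbf{w}_t^*\|_2^2\sigma_e^2+\sigma_n^2)$. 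Plugging this into $\mathbf{w}_w^* = \sigma_n\mathbf{v}^*$ gives
\begin{align}
\mathbf{w}_w^* = \sigma_n \cdot \frac{1}{\sigma_n}\cdot\frac{\sigma_n}{\sqrt{\|\mathbf{w}_t^*\|_2^2\sigma_e^2+\sigma_n^2}}\,\mathbf{w}_t^* = \frac{\mathbf{w}_t^*}{\sqrt{1+\frac{\sigma_e^2}{\sigma_n^2}\|\mathbf{w}_t^*\|_2^2}},\notag
\end{align}
which is \eqref{Relation}. The collinearity and the "magnitude scaled down" statement are then immediate, since the scalar multiplier $(1+\frac{\sigma_e^2}{\sigma_n^2}\|\mathbf{w}_t^*\|_2^2)^{-1/2}$ is positive and at most $1$.

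The main obstacle is not computational but conceptual: one must argue carefully that $\mathbf{v}^* = \mathbf{v}_u^*$, i.e., that ignoring the perturbation is \emph{exactly} equivalent to dropping the constraint in the reformulated problem rather than merely a loosely related heuristic. This rests on the identification of the probit objective for model \eqref{Determin_matrix} with the objective \eqref{OBJ_OPT}/\eqref{UNOPT} under the substitution $\mathbf{v} \leftrightarrow \mathbf{w}/\sigma_n$, together with the assumption $\mathbf{w}_t^*\in\mathcal{W}$ which guarantees the constraint is slack; both ingredients are already available from Section IV, so once they are invoked correctly the remainder is the short algebraic substitution above. One should also note the edge case where the unconstrained problem \eqref{UNOPT} has no finite minimizer — then neither estimator is finite and the statement is understood in the limiting/projected sense of Section IV — but under the standing hypothesis $\mathbf{w}_t^*\in\mathcal{W}$ this does not arise.
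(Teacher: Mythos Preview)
Your proof is correct and follows essentially the same approach as the paper: identify $\mathbf{w}_w^* = \sigma_n \mathbf{v}_u^*$ by recognizing the perturbation-ignored problem as the unconstrained problem \eqref{UNOPT} after the substitution $\mathbf{v}=\mathbf{w}/\sigma_n$, then use the transform \eqref{tansform}/\eqref{invtrans} with the true $\sigma_e^2$ to express $\mathbf{v}_u^*$ in terms of $\mathbf{w}_t^*$, and eliminate. You are more explicit than the paper about why the constrained and unconstrained minimizers coincide under the hypothesis $\mathbf{w}_t^*\in\mathcal{W}$, which is a welcome clarification but not a different argument.
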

\begin{proof}
Substituting $\sigma_e^2=0$ in (\ref{invtrans}), one has
\begin{align}
{\mathbf w}_{w}^* &= \sigma_n {\mathbf v}_{u}^*. \label{opt_a}
\end{align}
Considering the existence of the perturbation, one has
\begin{align}
{\mathbf v}_{u}^* &= \frac{{\mathbf w}_t^*}{{\sqrt{\|{\mathbf w}_t^*\|_2^2\sigma_e^2+\sigma_n^2}}}.\label{opt_b}
\end{align}
Using (\ref{opt_a}) and (\ref{opt_b}) to eliminate ${\mathbf v}_{u}^*$, the equation (\ref{Relation}) is established.
\end{proof}

In some applications such as binary regression problems, the direction of the parameter $\mathbf w$ is much more important than the magnitude of $\mathbf w$. In this situation, even though we do not know the strength of perturbation, we can still estimate the direction of $\mathbf w$ by perturbation-ignored estimator using model (\ref{Determin_matrix}). Similar result in binary regression problem is obtained in \cite{Burrdoss}, which focuses on model (\ref{Binary_reg}) with sensing matrix perturbation and is consistent with Proposition 5.

According to the analysis above, if the information of  the perturbation ${\bf E}$ and additive noise ${\bf n}$ is both unknown, and the measurements are generated by model (\ref{Gaussian_model}), we can still use model (\ref{Binary_reg}) to estimate the direction of the parameter ${\bf w}$.

Now we discuss the performance of both estimators. Both the parameters ${\mathbf w}_0$ and $\sigma_n^2$ are supposed to be fixed. We show that each estimator has its own advantages and disadvantages, and the performance of the estimator depends on the strength of perturbation $\sigma_e^2$ and the number of measurements $N$.

When the number of measurements $N$ tends to infinity, ${\mathbf w}_{t}^*$ converges to ${\mathbf w}_0$ in probability. According to (\ref{Relation}), the estimator ignored perturbation is inconsistent. For the vector parameter estimation problem, the square error between ${\mathbf w}_{w}^*$ and ${\mathbf w}_{t}^*$ is
\begin{align}\label{Mse_conv}
\|{\mathbf w}_{w}^*-{\mathbf w}_{t}^*\|_2^2=\|{\mathbf w}_{t}^*\|_2^2\left(1-\left(\frac{\sigma_e^2}{\sigma_n^2}\|{\mathbf w}_{t}^*\|_2^2+1\right)^{-\frac{1}{2}}\right)^2.\nonumber
\end{align}
According to the continuous mapping theorem \cite{larry}, the squared error converges to
\begin{align}\label{MSE_converge}
\|{\mathbf w}_{w}^*-{\mathbf w}_{t}^*\|_2^2 \stackrel{\rm p}{\rightarrow} \|{\mathbf w}_{0}\|_2^2\left(1-\left({\frac{\sigma_e^2}{\sigma_n^2}\|{\mathbf w}_{0}\|_2^2+1}\right)^{-\frac{1}{2}}\right)^2.
\end{align}
If ${\sigma_e^2}\ll {\sigma_n^2}/\|{\mathbf w}_{0}\|_2^2$, the squared error is approximately $\frac{\textstyle \sigma_e^4}{\textstyle 4\sigma_n^4}\|{\mathbf w}_0\|_2^6$, and the relative error is $\frac{\textstyle \sigma_e^2}{\textstyle 2\sigma_n^2}\|{\mathbf w}_0\|_2^2$.

We have discussed the probability that the ML estimator ${\mathbf w}_{t}^*$ is finite. When the number of measurements is not large enough, the ML estimator ${\mathbf w}_t^*$ has a much larger probability of being infinite than ${\mathbf w}_w^*$. Meanwhile, even if the ML estimator ${\mathbf w}_t^*$ is finite, the norm of ${\mathbf w}_{t}^*$ may be much larger than that of ${\mathbf w}_0$. In this situation, ${\mathbf w}_{t}^*$ may have a larger MSE than ${\mathbf w}_{w}^*$ despite the bias of ${\mathbf w}_{w}^*$.

To sum up, the direction of perturbation-ignored estimator is the same with that of ML estimator.
If the perturbation and additive noise is both unknown, we can still estimate the direction of the unknown parameter vector.
Although the perturbation-ignored estimator is inconsistent and biased, it works better in the MSE sense when the number of measurements is not large.

\section{Numerical Simulations}

In this section, several numerical simulations are performed to verify the theoretical results presented in previous sections. In these simulations, when the unknown parameter $w$ is a scalar, the mean sensing matrix is chosen as ${\mathbf H}=[1,1,\cdots,1]$. Whereas when the unknown parameter $\mathbf w$ is a vector, the mean sensing matrix $\mathbf H$ is drawn with each entry $h_{ij}\sim {\mathcal N}(0,1)$. All the MSEs of the ML estimator are averaged over 2000 Monte Carlo (MC) trials unless stated otherwise. We assume that $R_w=4\|{{\mathbf w}_0}\|_2$. The binary measurements are generated by model (\ref{Gaussian_model}).
\subsection{Validation of the Performance Limits}
The first simulation is to validate the correctness of Proposition \ref{Pro_ineq}. The data is generated as follows. We set $p = 4$, $N=300$, $\sigma_z^2=4\|{\mathbf w}_0\|_2^2$, and generate the true parameter ${\mathbf w}_0\in {\mathbb{R}}^p$ from ${\mathcal N}({\mathbf 0},{\mathbf I})$. The results are plotted in Fig. \ref{fig_ine} with $\gamma$ varying from $10^{-2}$ to $10^2$. It can be seen that the bounds are proportional to $\gamma$ when $\gamma$ is small, while they are proportional to $\gamma^2$ when $\gamma$ is large. 
\begin{figure}[h!t]
\centering
\includegraphics[width=3.5in]{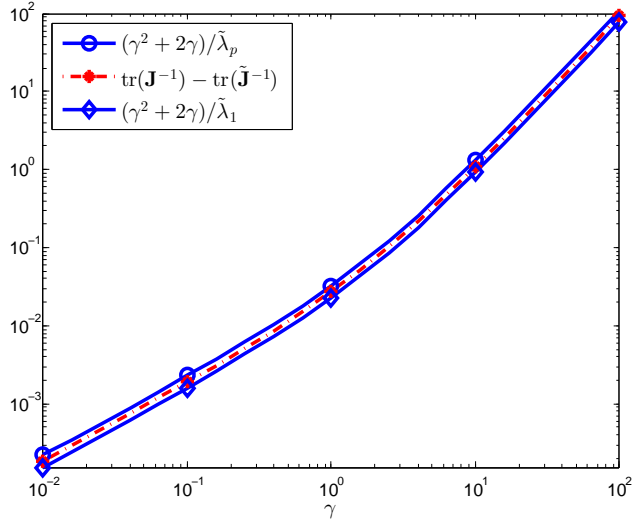}
\caption{Validation of Proposition \ref{Pro_ineq} with a log-log plot. All the three curves are approximately piecewise.}
\label{fig_ine}
\end{figure}

The second simulation is to validate the existence of the  optimal variance of additive noise when $\sigma_e^2$ is fixed. The parameters are selected as $\sigma_e^2=0.3$ and $w_0=1$. It can be seen that the Chernoff bound is a very accurate approximation of the CRLB. Meanwhile, it is also demonstrated that (\ref{sigman_opt}) is a good approximation of the optimal variance of the additive noise.
\begin{figure}[h!t]
\centering
\includegraphics[width=3.5in]{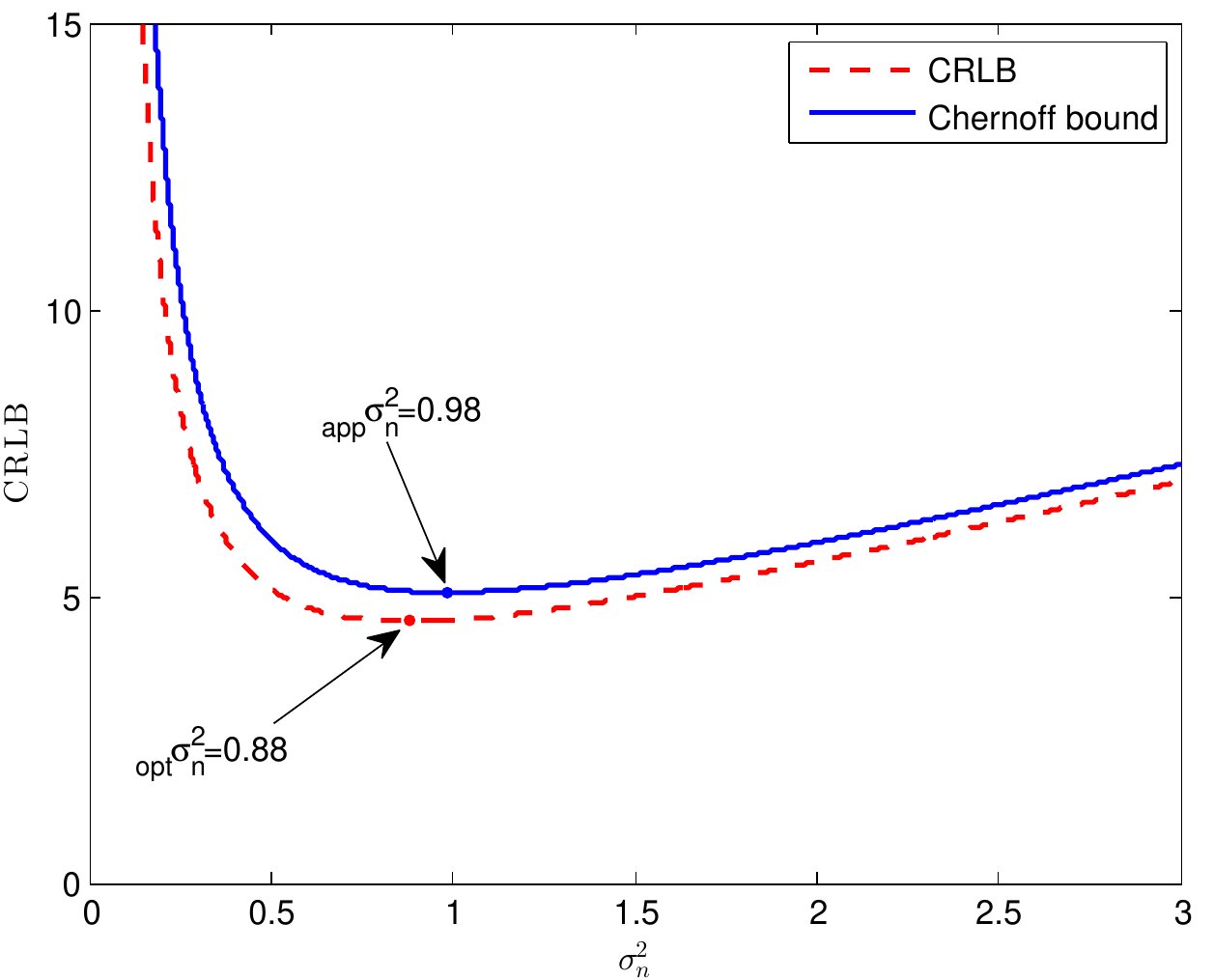}
\caption{The optimal variance of the additive noise when $\sigma_e^2$ is fixed as $0.3$. Note that the red dashed line indicates the CRLB, while the blue solid line indicates the Chernoff bound. The red point denotes the optimal value ${}_{\rm opt}\sigma_n^2=0.88$, and the blue point denotes the approximated value ${}_{\rm app}\sigma_n^2=0.98$.}
\label{optimal_add_noise}
\end{figure}

In the third simulation, the existence of the optimal strength of perturbation is validated when $\sigma_n^2$ is fixed. We set $\sigma_n^2=0.1$ and $w_0=1$. In this case, $\sigma_n^2/w_0^2\leq 1/6$ and the optimal strength of perturbation exists. Although the Chernoff bound is not very tight when $\sigma_e^2$ is near $0$, it still reflects the trend of the CRLB well.
\begin{figure}[h!t]
\centering
\includegraphics[width=3.5in]{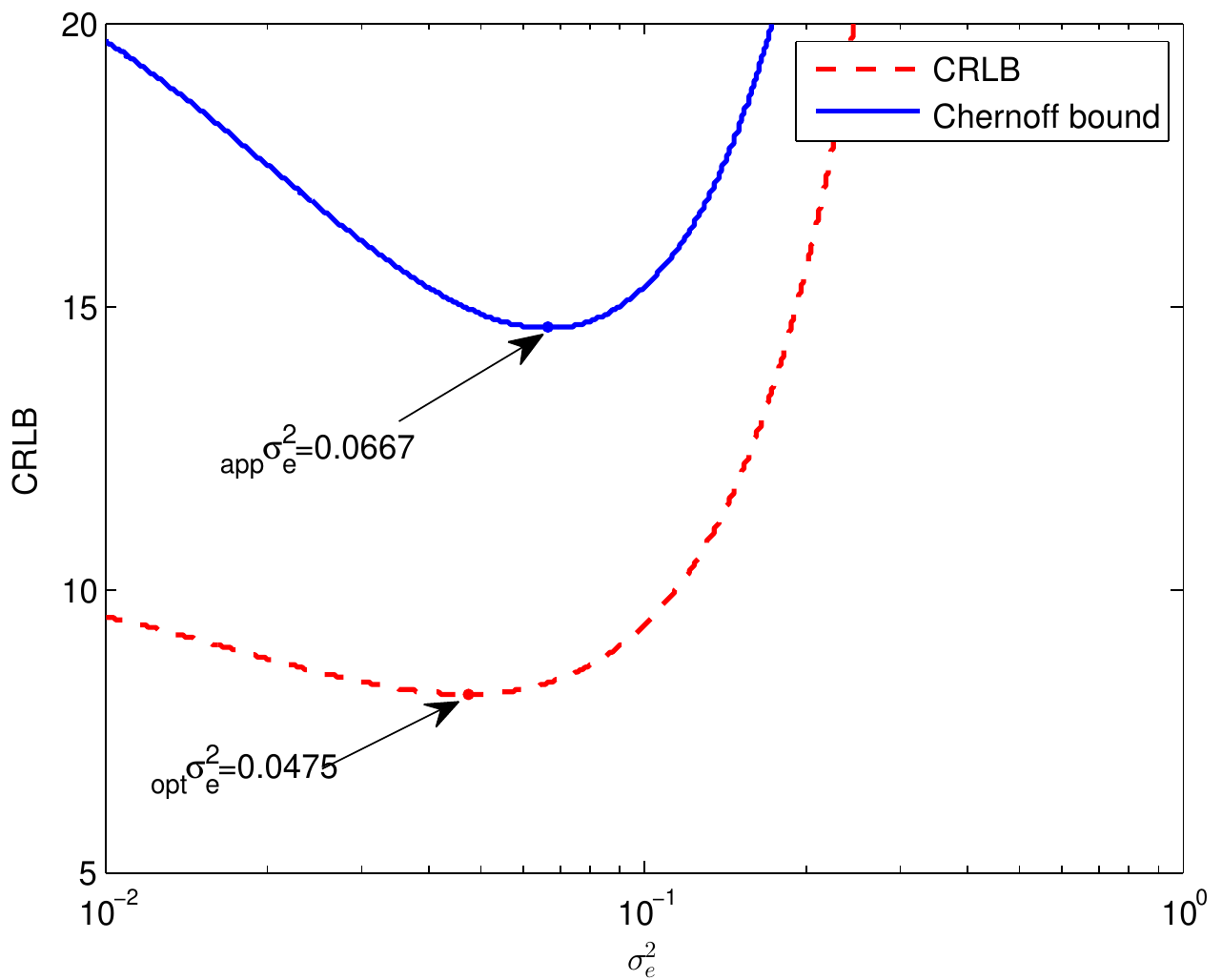}
\caption{The optimal strength of the perturbation when $\sigma_n^2$ is fixed as $0.1$. The red dashed line indicates the CRLB, while the blue solid line indicates the Chernoff bound. The red point denotes the optimal value ${}_{\rm opt}\sigma_e^2=0.0475$, and the blue point denotes the approximated value ${}_{\rm app}\sigma_e^2=0.0667$.}
\label{optimal_pertu_noise}
\end{figure}
\subsection{Simulations of Probability Results}
The first simulation is to substantiate Proposition \ref{opt_equva}. The parameters are selected as: $\sigma_n^2=1$, $\sigma_e^2=0.5$, $w_0=1$ and $N=40$. Two typical realizations about the negative log-likelihood function in (\ref{gen_p}) versus $w$ are plotted in Fig. \ref{fig_sim}. It can be seen that the negative log-likelihood function (\ref{gen_p}) is nonconvex. It is also shown that the original problem (\ref{gen_p}) has an optimal point if $v_u^*$ satisfies the constraint (\ref{Con_OPT}). Whereas when $v_u^*$ violates the constraint (\ref{Con_OPT}), the optimal point of problem (\ref{gen_p}) does not exist.
\begin{figure}[h!t]
\centering
\includegraphics[width=3.5in]{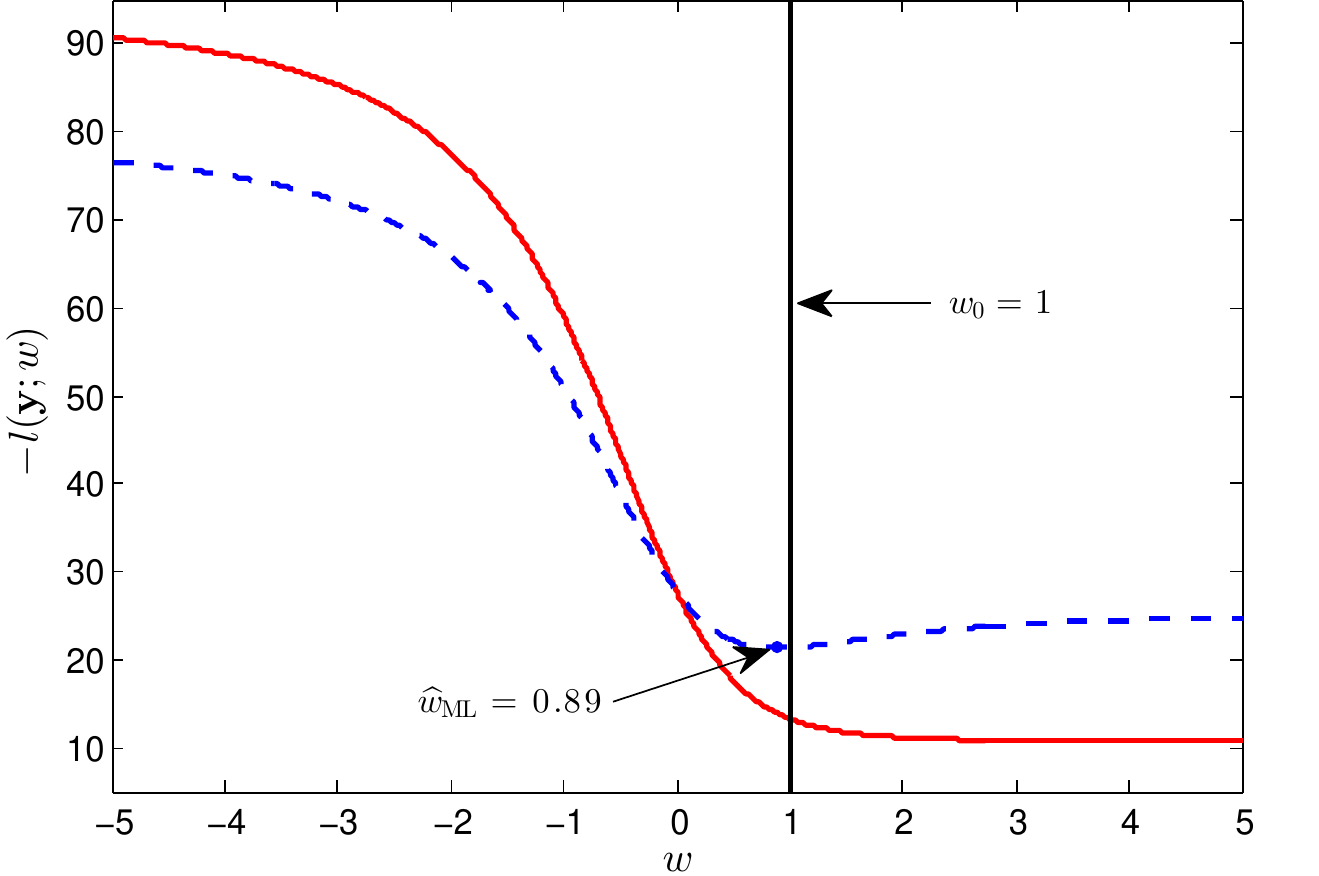}
\caption{The negative log-likelihood (\ref{gen_p}) versus the parameter $w$. The blue dashed line indicates that $v_u^*$ does not violate the constraint (\ref{Con_OPT}), while the red solid line indicates that $v_u^*$ violates the constraint (\ref{Con_OPT}). The black vertical solid line denotes the true value $w_0=1$, and the blue point denotes the estimated value $\hat{w}_{\rm ML}=0.89$, corresponding to the optimal point of the blue dashed line. }
\label{fig_sim}
\end{figure}

The next two simulations focus on the probability that the log-likelihood function is unimodal. The probability ${\rm P}_{\mathcal V}$ is computed by (\ref{Cal_pro}) and two cases are considered. For the first case, we assume that $\sigma_n^2=0.3$ and $w_0=1$. Note that the variance of the additive noise is small compared with $w_0$. From Fig. \ref{fig_pro1}, we see that ${\rm P}_{\mathcal V}$ is not monotonically increasing with $N$. This is mainly because of the floor and ceil operations. Nevertheless, the overall trend is increasing with $N$. Meanwhile, for a fixed $N$, it is shown that ${\rm P}_{\mathcal V}$ increases with $\sigma_e^2$ when $\sigma_e^2$ is small. When the strength of perturbation is large, ${\rm P}_{\mathcal V}$ decreases with it. Therefore, suitable perturbation can improve the probability ${\rm P}_{\mathcal V}$ when $N$ is not too large. For the second case, we set $\sigma_z^2=2$ and $w_0=1$. In Fig. \ref{fig_pro2}, it can be seen that the larger the strength of perturbation, the smaller the probability ${\rm P}_{\mathcal V}$ is. This result is consistent with our analysis by the normal approximation.
\begin{figure}[h!t]
\centering
\includegraphics[width=3.5in]{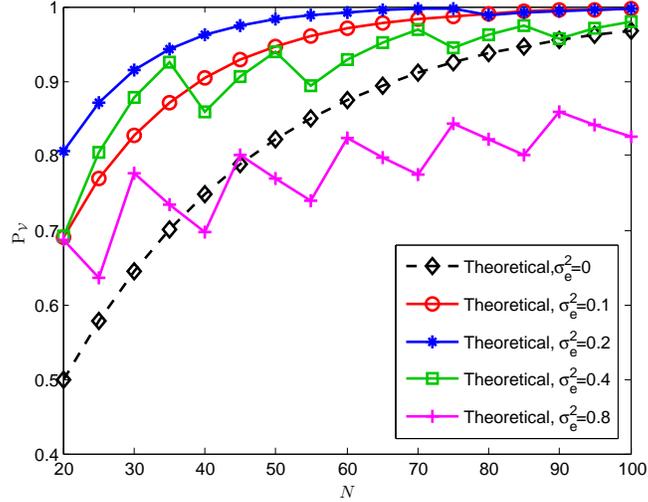}
\caption{The probability ${\textrm{P}_{\mathcal V}}$ versus the number of observations $N$ when $\sigma_n^2$ is fixed as $0.3$. }
\label{fig_pro1}
\end{figure}
\begin{figure}[h!t]
\centering
\includegraphics[width=3.5in]{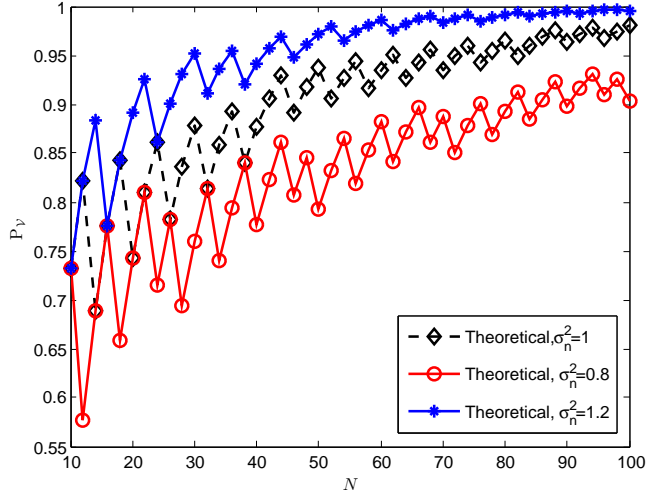}
\caption{The probability ${\textrm{P}_{\mathcal V}}$ versus the number of observations $N$ when $\sigma_z^2$ is fixed as $2$.}
\label{fig_pro2}
\end{figure}
\subsection{Performance of ML Estimator}
In this subsection, several MC simulations are performed to evaluate the MSE performance of the ML estimator against the CRLB. The MATLAB \emph{fminunc} function is used to solve the problem \cite{Geletu}. We assume that ${\mathbf w}_0=[0.7,0.5,-0.6]^{\rm T}$ in the first and the third simulations.

In the first simulation, we compare the MSE performance of the ML estimator with the CRLB with fixed $\sigma_z^2$ or $\sigma_n^2$. The results are plotted in Fig. \ref{fig_mse1}. It is shown that the performance worsens with the increase of the proportion of the multiplicative noise when $\sigma_z^2$ is fixed. When $\sigma_n^2$ is fixed, it is demonstrated that the perturbation exacerbates the performance of estimation. The reason is that the norm of ${\mathbf w}_0$ is comparable with $\sigma_n$ in this case, and the optimal ${}_{\rm opt}{\sigma_e^2}$ is zero. Note that the MSE of the ML estimator is much larger than the corresponding CRLB when $N$ is small. This is mainly because the norm of the ML estimator has a substantial probability of being $R_w$.
\begin{figure}[h!t]
\centering
\includegraphics[width=3.5in]{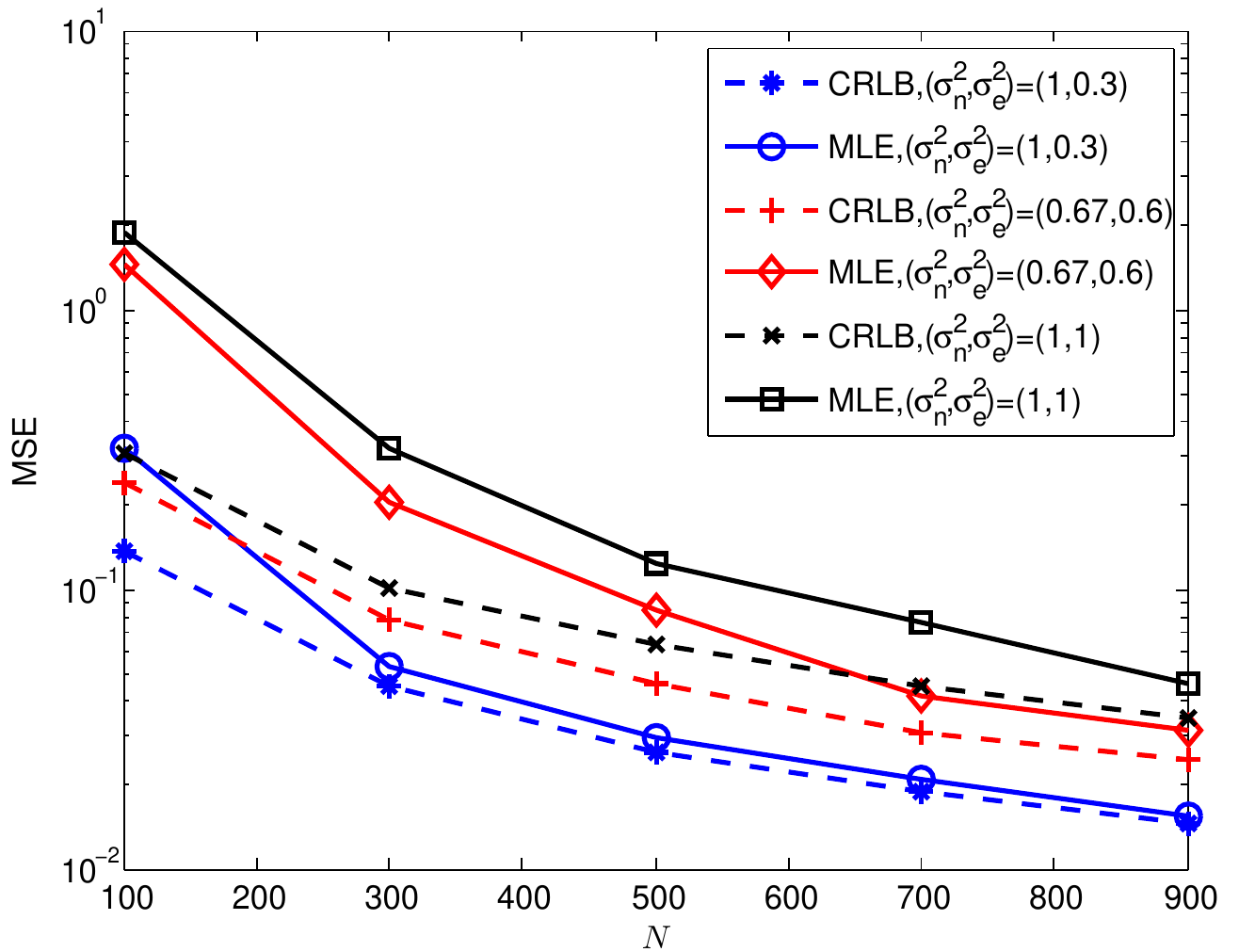}
\caption{The MSE of the ML estimator and the CRLB for different number of observations. Note that the blue and the red lines correspond to the case that the variance of equivalent noise $\sigma_z^2$ is fixed. While the blue and the black lines correspond to the case that the variance of additive noise $\sigma_n^2$ is fixed.}
\label{fig_mse1}
\end{figure}

The second simulation assumes that $\sigma_e^2=0.3$ and $w_0=1$, and the MSEs are averaged over $5000$ MC trials. The results are plotted in Fig. \ref{fig_mse3}. The optimal variance of the additive noise is $0.88$, thus its CRLB is smaller than the $\sigma_n^2=2$ case. Meanwhile, the ML estimator approaches faster to the CRLB in the $\sigma_n^2=2$ case. As the number of measurements increases, the ML estimator also attains the CRLB in the $\sigma_n^2=0.88$ case.
\begin{figure}[h!t]
\centering
\includegraphics[width=3.5in]{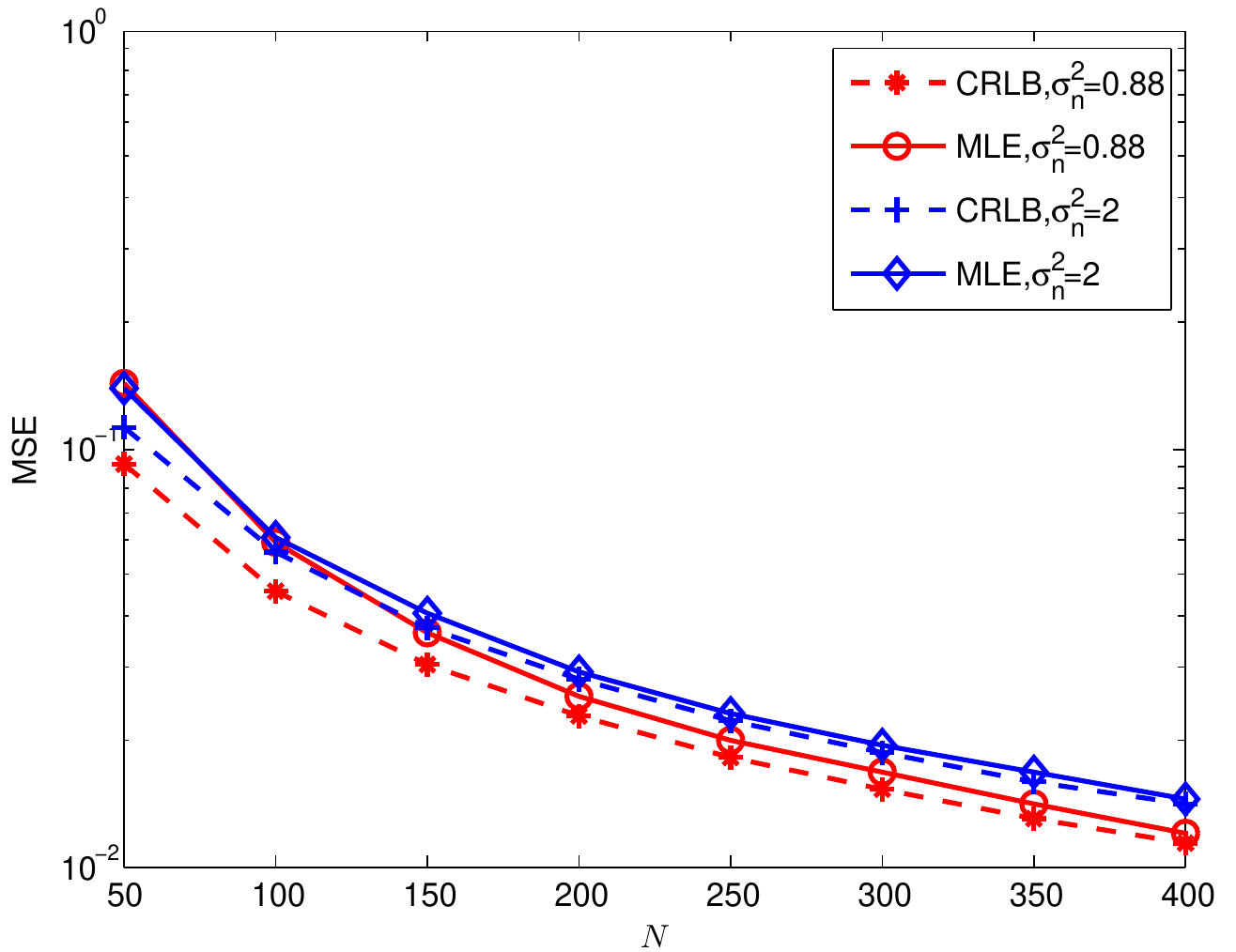}
\caption{The MSE of the ML estimator and the CRLB for different number of observations when $\sigma_e^2$ is fixed as $0.3$.}
\label{fig_mse3}
\end{figure}

Finally, the ML estimator is compared with other estimation methods. The parameters are set as follows: $\sigma_n^2=1$, $\sigma_e^2=0.4$. Three estimators are considered, including the ML estimator (\ref{ML}), the perturbation-ignored estimator (\ref{opt_a}) and a perturbation-known estimator corresponding to a completely known sensing matrix. All the MSEs are then compared with the CRLB (\ref{mse_crlb}). The results are plotted in Fig. \ref{Comp_vector}. It is obvious that the perturbation-known case performs better than the CRLB in which the perturbation is assumed unknown. When the number of measurements $N$ is smaller than $300$, the perturbation-ignored estimator works better than the ML estimator. 
The reason is that the norm of the ML estimator has a much larger probability of being the norm threshold $R_w$ than that of the perturbation-ignored estimator when the number of measurements is not large enough. While the effect of the bias of the perturbation-ignored estimator becomes apparent when the number of measurements increases.   
Thus the MSE of the perturbation-ignored estimator decreases slowly as $N$ increases. In fact, the MSE converges to $0.031$ according to (\ref{MSE_converge}). The MSE of the ML estimator decreases as expected, and asymptotically achieves the CRLB.
\begin{figure}[t]
\centering
\includegraphics[width=3.5in]{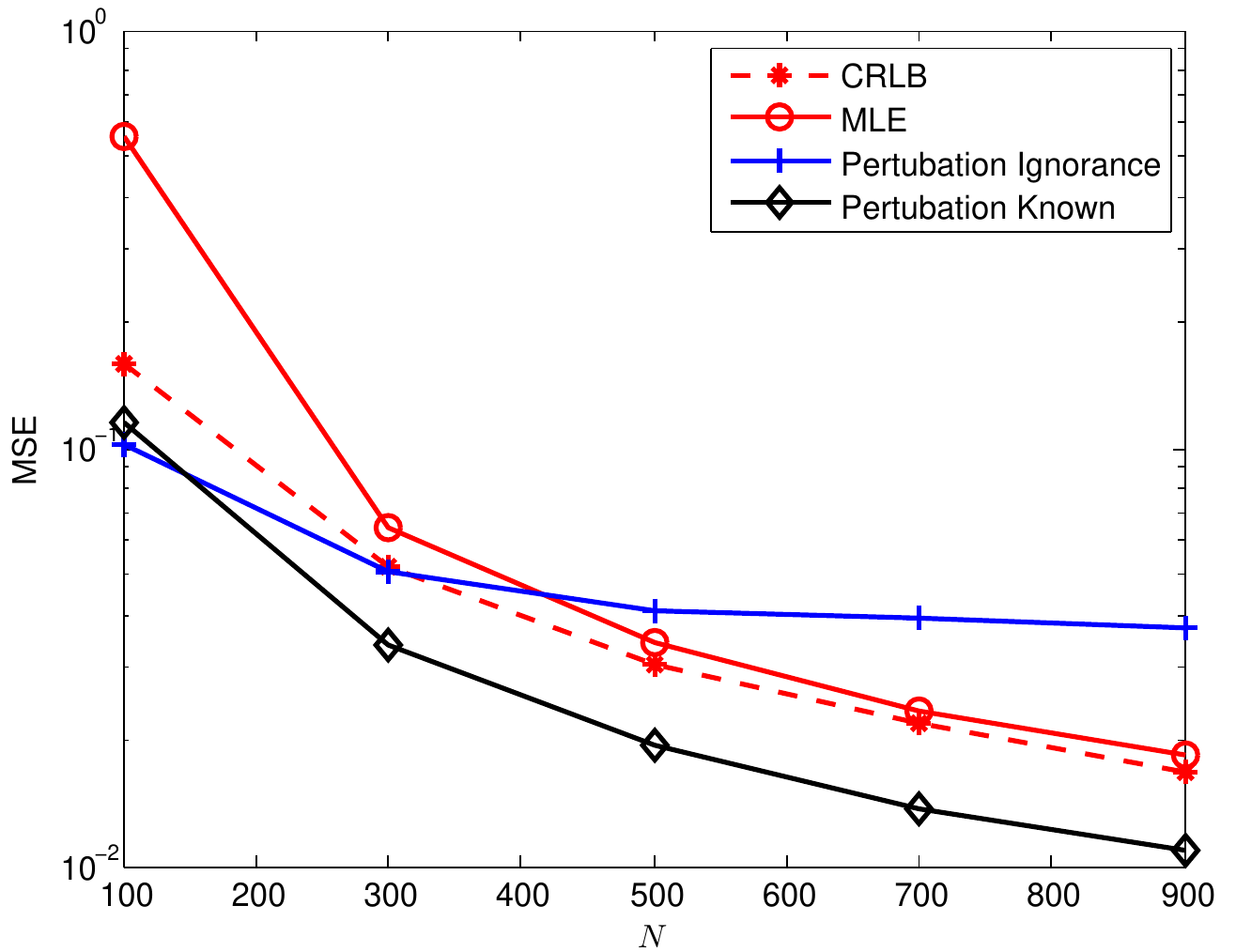}
\caption{MSE comparison in estimating the vector parameter by three estimators.}
\label{Comp_vector}
\end{figure}
\section{Conclusion}
In this paper, we have studied the problem of estimating a deterministic parameter vector from sign measurements with a perturbed sensing matrix. Firstly, the ML estimator was utilized to estimate the unknown parameter and it was proved to be consistent. The CRLB was derived to analyze the performance of the estimator. It was demonstrated that with the variance of the equivalent noise fixed, the perturbation exacerbates the performance of the estimation. Meanwhile, under certain relationship between the variance of additive noise and the strength of the perturbation, the CRLB on the MSE will achieve its minimum. This result demonstrates that in the MSE sense, suitable perturbation may be beneficial in some special cases. Secondly, it was shown that the original ML estimation problem could be transformed to a convex optimization problem, which can be efficiently solved. Theoretical analysis implied that suitable perturbation may be beneficial to improve the probability that the optimal point of the ML estimator exists. Furthermore, under a perturbed sensing matrix, the perturbation-ignored estimator is a scaled version with the same direction of the ML estimator. It was also shown that the perturbation-ignored estimator works well when the number of measurements is not large enough and the perturbation is small. However, the perturbation-ignored estimator is biased and its MSE converges to a constant as the number of measurements increases. In contrast, the ML estimator is unbiased and achieves the CRLB in the asymptotic sense.

\appendices

\section{Proof of Theorem~\ref{ML_con}}
\begin{proof}\label{MLcon_appendix}
We first define the normalized log-likelihood function as
\begin{align}\label{lN}
l_N({\mathbf w})&\triangleq\frac{1}{N} l({\mathbf y};{\mathbf w})=\frac{1}{N}\sum\limits_{i=1}^{N}{\log }\Phi \left(y_i\frac{{{\mathbf h}_i^{\rm T}{\mathbf w}}}{\sigma_z}\right).
\end{align}

As $N$ tends to infinity, the weak law of large numbers implies that
\begin{align}
\lim_{N\rightarrow \infty}l_N({\mathbf w})\stackrel{\rm p}{\rightarrow}l_0({\mathbf w})\triangleq{\rm E}_{y,{\mathbf h}}\left[\log\Phi \left(y\frac{{{\mathbf h}^{\rm{T}}\mathbf{w}}}{\sigma_z}\right)\right],\label{L0}
\end{align}
where the expectation is taken with respect to $y$ and $\mathbf h$, and the notation $\stackrel{\textrm p}{\rightarrow}$ denotes convergence in probability. Since model (\ref{Gaussian_model}) is identifiable, $l_0({\mathbf w})$ has a unique maximum attained at ${\mathbf w}_0$ by the information inequality \cite{Kay211}. In order to claim that $\hat{\mathbf w}_{\rm ML}$ converges to ${\mathbf w}_0$ in probability as $N\rightarrow \infty$, one needs to ensure that the limiting and maximization operations in (\ref{gen_p}) and (\ref{L0}) can be interchanged. Sufficient conditions for the maximum of the limit to be the limit of the maximum are that the parameter space is compact and the normalized log-likelihood function $l_N({\mathbf w})$ converges uniformly to $l_0({\mathbf w})$ in probability \cite{WNewey}. It is obvious that the parameter space $\mathcal W$ is bounded and closed. To prove the uniform convergence in probability, note that $l_N({\mathbf w})$ is continuous, thus it suffices to show that there exists a function $U({\mathbf y}, {\mathbf H})$ such that
\begin{align}
|l_N({\mathbf w})|\leq U({\mathbf y}, {\mathbf H}),\quad \forall ~{\mathbf w}\in {\mathcal W}.\label{findfun}
\end{align}

To find such a function $U({\mathbf y}, {\mathbf H})$, we may use the mean value expansion of $q({\mathbf w}) = \log\Phi\left(y_i\frac{\textstyle{{\mathbf h}_i^{\rm T}{\mathbf w}}}{\textstyle{\sigma_z}}\right)$ around the origin ${\mathbf w}={\mathbf 0}$. Notice that the derivative of $\log\Phi(x)$ is
\begin{align}
k(x)\triangleq\frac{\partial\log\Phi(x)}{\partial x}=\frac{1}{{\Phi(x)}}\frac{{\partial\Phi(x)}}{\partial x},\notag
\end{align}
which is convex and positive. When $x\rightarrow \infty$, $k(x)$ tends to zero, on the other hand, $k(x)$ tends to $-x$ as $x\rightarrow -\infty$. As a consequence, there exists a suitable constant $C>0$ such that
\begin{align}\notag
k(x)\leq C(1+|x|).
\end{align}
By the mean value theorem, the following result is obtained,
\begin{align}\label{mlcons1}
\left|\log\Phi\left(y_i\frac{{{\mathbf h}_i^{\rm T}{\mathbf w}}}{{\sigma_z}}\right)\right|&=\left|\log\Phi(0)+\nabla q({\mathbf w}')^{\rm T}{\mathbf w}\right|\notag\\
&\leq \left|\log\Phi(0)\right|+\left\|\nabla q({{\mathbf w}'})\right\|_2\left\|{\mathbf w}\right\|_2,
\end{align}
where ${\mathbf w}'$ is some point lying in the parameter space $\mathcal W$. The norm of the gradient $\nabla q({\mathbf w}')$ can be upper bounded by
\begin{align}\label{mlcons2}
\left\|\nabla q\left({{\mathbf w}'}\right)\right\|_2&=\left\|\frac{y_i}
{{\tilde{\sigma}_z}}k\left(y_i\frac{{{\mathbf h}_i^{\rm T}{{\mathbf w}'}}}{{\tilde{\sigma}_z}}\right)\left({\mathbf h}_i-\frac{\sigma_e^2}{{\tilde{\sigma}_z^2}}\left({\mathbf h}_i^{\rm T}{{\mathbf w}'}\right){{\mathbf w}'}\right)\right\|_2\notag\\
&\leq\frac{1}{\sigma_n}k\left(y_i\frac{{{\mathbf h}_i^{\rm T}{\mathbf w}'}}{{\tilde{\sigma}_z}}\right)\left\|{\mathbf h}_i-\frac{{\sigma_e^2}}{{\tilde{\sigma}_z^2}}\left({\mathbf h}_i^{\rm T}{\mathbf w}'\right){\mathbf w}'\right\|_2\notag\\
&\leq\frac{C}{{\sigma}_n}\left(1+\left|\frac{{{\mathbf h}_i^{\rm T}{\mathbf w}'}}{{\tilde{\sigma}_z}}\right|\right)\left\|{\mathbf h}_i-\frac{{\sigma_e^2}}{{\tilde{\sigma}_z^2}}\left({\mathbf h}_i^{\rm T}{{\mathbf w}'}\right){{\mathbf w}'}\right\|_2\notag\\
&\leq\frac{C}{{\sigma}_n}\left(1+\frac{{R_w}}{\sigma_n}\left\|{\mathbf h}_i\right\|_2\right)\left\|{\mathbf h}_i-\frac{{\sigma_e^2}}{{\tilde{\sigma}_z^2}}\left({\mathbf h}_i^{\rm T}{{\mathbf w}'}\right){{\mathbf w}'}\right\|_2,
\end{align}
where
\begin{align}\label{mlcons3}
\left\|{\mathbf h}_i-\frac{{\sigma_e^2}}{{\tilde{\sigma}_z^2}}\left({\mathbf h}_i^{\rm T}{{\mathbf w}'}\right){\mathbf w}'\right\|_2&\leq\left(\left\|{\mathbf h}_i\right\|_2+\frac{{\sigma_e^2}}{{{\sigma}_n^2}}{R_w}\left\|{\mathbf h}_i\right\|_2\left\|{\mathbf w}'\right\|_2\right)\notag\\
&\leq\left(1+\frac{{\sigma_e^2}}{{{\sigma}_n^2}}R_w^2\right)\left\|{\mathbf h}_i\right\|_2.
\end{align}
The above inequalities follow from $\|{\mathbf w}\|_2\leq R_w$, $\|{{\mathbf w}'}\|_2\leq R_w$, and $\tilde{\sigma}_z^2 = \|{\mathbf w}'\|_2^2\sigma_e^2+\sigma_n^2\geq \sigma_n^2$. Plugging (\ref{mlcons1}), (\ref{mlcons2}) and (\ref{mlcons3}) into (\ref{lN}), it follows that
\begin{align}
l_N({\mathbf w})\leq |\log\Phi(0)|+\frac{C_1}{N}\sum_{i=1}^N\left(1+\frac{\textstyle{R_w}}{{{\sigma}_n}}\|{\mathbf h}_i\|_2\right)\|{\mathbf h}_i\|_2,\label{bound}
\end{align}
where $C_1$ is a constant. Using $U({\mathbf y}, {\mathbf H})$ to denote the right side of the equation (\ref{bound}), the condition (\ref{findfun}) is satisfied. Therefore, the consistency of the  ML estimator is proved.
\end{proof}

\section{Proof of Theorem \ref{crlb_theorem}}\label{CRB_appendix}

\begin{proof}
We first show that the regularity condition holds for the likelihood function ${\rm {Pr}}(\mathbf y;\mathbf w)$. The gradient of the log-likelihood function $l(\mathbf{y};\mathbf{w})$ with respect to $\mathbf w$ is
\begin{align}
    \nabla_{\mathbf w}l({\mathbf y};{\mathbf w})\notag
    =\frac{1}{\sqrt{2\pi}\sigma_z}\sum_{i=1}^N
    \Bigg(\frac{y_i}{\Phi \left(y_i\frac{{\mathbf h}_i^{\rm T}{\mathbf w}}{\sigma_z} \right)}{\textrm e}^{-\frac{{\left({\mathbf h}_i^{\rm T}{\mathbf w} \right)}^2}{2\sigma_z^2}}\left({\mathbf h}_i-\frac{\sigma_e^2}{\sigma_z^2}({\mathbf h}_i^{\rm T}{\mathbf w}){\mathbf w} \right)\Bigg).\notag
\end{align}
The probability distribution function for $y_i$ is
\begin{align}
    y_i =
    \begin{cases}
       -1,&{\rm {with~probability}}\quad\Phi \left(-\frac{{\mathbf h}_i^{\rm T}{ \mathbf w}}{\sigma_z} \right); \\
       1,&{\rm {with~probability}}\quad \Phi \left(\frac{{\mathbf h}_i^{\rm T}{\mathbf w}}{\sigma_z}\right).\nonumber
   \end{cases}
\end{align}
It follows that for all $\mathbf w$, the regularity condition ${\rm E}_{\mathbf y}\left[\nabla_{\mathbf w}l({\mathbf y};{\mathbf w})\right]={\mathbf 0}$ holds.

Fortunately, a closed-from expression for the CRLB can be obtained in the case of a vector parameter CRLB for transformation.

Suppose that we wish to estimate ${\boldsymbol {\alpha}}={\mathbf g}({\boldsymbol \theta})$, where $\mathbf g$ is a $r$-dimensional function and $\boldsymbol \theta$ is a $s$-dimensional parameter vector. Then the CRLB of ${\boldsymbol \alpha}$ from ${\boldsymbol \theta}$ is given by \cite{Kay45}
\begin{align}
{\rm {Cov}}(\hat{\boldsymbol \alpha})\succeq \frac{\partial{\mathbf g}({\boldsymbol \theta})}{\partial{\boldsymbol \theta}}\left({\mathbf J}(\boldsymbol \theta)\right)^{-1}\frac{\partial{\mathbf g}({\boldsymbol \theta})^{\rm T}}{\partial{\boldsymbol \theta}}, \label{CRB_lemma}
\end{align}
where ${\mathbf J}(\boldsymbol \theta)$ is the FIM of $\boldsymbol \theta$.

We may define
\begin{align}
{\mathbf v} = \frac{\mathbf w}{\sqrt{\|\mathbf w\|_2^2\sigma_e^2+\sigma_n^2}}. \notag
\end{align}
Then $\mathbf w$ can be uniquely determined from ${\mathbf v}$ by
\begin{align}
{\mathbf w}=\frac{\sigma_n}{{\sqrt{1-\sigma_e^2\|{\mathbf v}\|_2^2}}}{{\mathbf v}}.\notag
\end{align}

In our setting, ${\boldsymbol \alpha}={\mathbf w}$, ${\boldsymbol \theta}={{\mathbf v}}$ and ${\mathbf w}={\mathbf g}({{\mathbf v}})$. The log-likelihood function $l({\mathbf y};{\mathbf v})$ for ${\mathbf v}$ is
\begin{align}
 l({\mathbf y};{\mathbf v})=\sum\limits_{i=1}^{N}{\log}\Phi\left(y_i{\mathbf h}_i^{\rm{T}}{\mathbf v}\right).\notag
\end{align}
Its gradient and Hessian are
\begin{align}\notag
\nabla_{{\mathbf v}}l({\mathbf y};{\mathbf v})=\frac{1}{\sqrt{2\pi}}\sum_{i=1}^N
    \frac{y_i}{\Phi \left(y_i{\mathbf h}_i^{\rm T}{\mathbf v}\right)}{\textrm e}^{-\frac{\left({\mathbf h}_i^{\rm T}{\mathbf v} \right)^2}{2}}{\mathbf h}_i,
\end{align}
and
\begin{align}
\nabla_{{\mathbf v}}^2l({\mathbf y};{\mathbf v})=&-\frac{1}{\sqrt{2\pi}}\sum_{i=1}^N
    \frac{y_i}{\Phi \left(y_i{\mathbf h}_i^{\rm T}{\mathbf v}\right)}{\rm e}^{-\frac{\left({\mathbf h}_i^{\rm T}{\mathbf v} \right)^2}{2}}({\mathbf h}_i^{\rm T}{{\mathbf v}}){\mathbf h}_i{\mathbf h}_i^{\rm T}\notag\\
&-\frac{1}{2\pi}\sum_{i=1}^N
    \frac{1}{\Phi^2\left(y_i{\mathbf h}_i^{\rm T}{\mathbf v}\right)}{\rm e}^{-{\left({\mathbf h}_i^{\rm T}{\mathbf v} \right)^2}}{\mathbf h}_i{\mathbf h}_i^{\rm T},\label{Hessian_tf}
\end{align}
respectively. The FIM can be computed as
\begin{align}
{\mathbf J}({{\mathbf v}}) = \sigma_z^2{\mathbf H}{\mathbf \Lambda}{\mathbf H}^{\rm T},\label{J1}
\end{align}
where $\mathbf \Lambda$ is defined as (\ref{lambda_ii}). The corresponding Jacobian matrix is
\begin{align}
\frac{\partial{\mathbf g}({{\mathbf v}})}{\partial{{{\mathbf v}}}}=\sigma_z({\mathbf I}+\frac{\sigma_{e}^2}{\sigma_{n}^2}{\mathbf w}{\mathbf w}^{\rm T}).\label{Jacob}
\end{align}
By employing the Sherman-Morrison formula \cite{Golub} and (\ref{Var_z}), one has
\begin{align}
\left({\mathbf I}-\frac{\sigma_{e}^2}{\sigma_{z}^2}{\mathbf w}{\mathbf w}^{\rm T}\right)^{-1}={\mathbf I}+\frac{\sigma_{e}^2}{\sigma_{n}^2}{\mathbf w}{\mathbf w}^{\rm T}.\label{bianhuan}
\end{align}
Substituting (\ref{J1}), (\ref{Jacob}) and (\ref{bianhuan}) in (\ref{CRB_lemma}), the CRLB is
\begin{align}
{\rm {Cov}}({\hat{\mathbf w}})\succeq \left({\mathbf M}{\mathbf \Lambda}{\mathbf M}^{\rm T}\right)^{-1},\nonumber
\end{align}
where $\mathbf M$ is defined as (\ref{Def_M}). The MSE is equal to the trace of the covariance matrix. Therefore, the result (\ref{mse_crlb}) is established.
\end{proof}
\section{Proof of Proposition~\ref{Pro_ineq}}
\begin{proof}\label{Ineq_appendix}
We first define
\begin{align}
{\mathbf N}={\mathbf I}-\frac{\sigma_e^2}{\sigma_z^2}{\mathbf w}{\mathbf w}^{\rm T}.\notag
\end{align}
Since $\tilde{\mathbf J}$ and $\mathbf N$ are both positive definite matrices, they can be factored as $\tilde{\mathbf J} = \mathbf U \tilde{\mathbf \Delta}{\mathbf U}^{\rm T}$ and ${\mathbf N} = {\mathbf V}{\mathbf \Delta}_{\mathbf N}{\mathbf V}^{\rm T}$, where ${\mathbf U},{\mathbf V} \in \mathbb{R}^{p\times p}$ are both orthogonal matrices, and $\tilde{\mathbf \Delta}={\rm {diag}}(\tilde{\lambda}_1,\cdots,\tilde{\lambda}_p)$, ${\mathbf \Delta}_{\mathbf N}={\rm {diag}}\left(1,\cdots,1,\frac{\textstyle {1}}{\textstyle {1+\gamma}}\right)$. Note that all eigenvalues except the last one are equal to $1$.
Using the equality ${\rm{tr}}(\mathbf {AB})={\rm{tr}}(\mathbf {BA})$, we obtain
\begin{align}
{\rm {tr}}(\tilde{\mathbf J})={\rm {tr}}(\tilde{\mathbf \Delta}^{-1})=\sum_{i=1}^{p}\frac{1}{\tilde{\lambda}_i}.\notag
\end{align}
When the variance of the equivalent noise $\sigma_z^2$ is fixed, one has
\begin{align}
{\mathbf J}&={\mathbf N}{\tilde{\mathbf J}}{\mathbf N}^{\rm T}\notag\\
           &={\mathbf V}{\mathbf \Delta}_{\mathbf N}{\mathbf V}^{\rm T}{\mathbf U}\tilde{\mathbf \Delta}{\mathbf U}^{\rm T}{\mathbf V}{\mathbf \Delta}_{\mathbf N}{\mathbf V}^{\rm T}.\notag
\end{align}
It follows that
\begin{align}
{\rm {tr}}({\mathbf J}^{-1})={\rm {tr}}({\mathbf \Delta}_{\mathbf N}^{-1}{\mathbf V}^{\rm T}{\mathbf U}\tilde{\mathbf \Delta}{\mathbf U}^{\rm T}{\mathbf V}{\mathbf \Delta}_{\mathbf N}^{-1}).\notag
\end{align}
Defining ${\mathbf Q}={\mathbf V}^{\rm T}{\mathbf U}\tilde{\mathbf \Delta}{\mathbf U}^{\rm T}{\mathbf V}$ and ${\mathbf T} ={\mathbf \Delta}_{\mathbf N}^{-1}{\mathbf \Delta}_{\mathbf N}^{-1}$, it is obvious that $\mathbf Q$ has the same eigenvalues with $\tilde{\mathbf \Delta}$. Since both $\mathbf Q$ and ${\mathbf T}$ are positive definite, by using the trace inequality \cite{Lasserre}, we have
\begin{align}
    \sum_{i=1}^{n}{\lambda_{{\mathbf Q},i}{\lambda_{{\mathbf T},n-i}}}\leq {\rm{tr}}({\mathbf {QT}})\leq
     \sum_{i=1}^{n}{\lambda_{{\mathbf Q},i}{\lambda_{{\mathbf T},i}}}.\notag
\end{align}
Therefore, the desired result (\ref{inequ}) is obtained.
\end{proof}

\section{Proof of Proposition \ref{uniq}}

\begin{proof}\label{Strict_convex}
We use $f({\mathbf v})$ to denote the objective function of (\ref{OPT}). According to (\ref{Hessian_tf}), the Hessian of $f({\mathbf v})$ is $$\nabla_{\mathbf v}^2 f({\mathbf v})=\sum_{i=1}^{N}\beta_i{\mathbf h}_i{\mathbf h}_i^{\rm T},$$ where
\begin{align}
\beta_i=\frac{1}{\sqrt{2\pi}\Phi^2\left(y_ix_i\right)}{\rm e}^{-\frac{1}{2}x_i^2}\left(\frac{1}{\sqrt{2\pi}}{\rm e}^{-\frac{1}{2}x_i^2}+y_ix_i\Phi(y_ix_i)\right),\notag
\end{align}
and $x_i={\mathbf h}_i^{\rm T}{\mathbf v}$. By the inequality $x\Phi(-x)<\frac{1}{\sqrt{2\pi}}{\rm e}^{-\frac{1}{2}x^2}, \forall~x\in \mathbb{R}$, one can show that $\beta_i>0$. Thus $\nabla_{\mathbf v}^2 f({\mathbf v})\succ 0$. The result is established.
\end{proof}

\section{Proof of Proposition \ref{Pr_b}}

\begin{proof}\label{Prob_appendix}
We first solve the unconstrained optimization problem
\begin{align}
v_u^*=\underset{v\in \mathbb {R}} {\operatorname{argmin}}~-\sum_{i=1}^{N}{\log }\Phi \left(y_iv\right).\notag
\end{align}
Assuming that the observation $\{y_i\}_{i=1}^N$ has $k$ ones. Setting the derivative of the objective function to zero and using the equality $\Phi (v_u^*)+\Phi (-v_u^*)=1$, one has
\begin{align}
v_u^*=\Phi^{-1}\left(\frac{k}{N}\right) ,\notag
\end{align}
where $\Phi^{-1}$ denotes the inverse function of $\Phi$.
Now we calculate the probability ${\rm P}_{\mathcal V}$ as
\begin{align}\label{bounddd}
{\rm P}_{\mathcal V}&={\rm {Pr}}\left[\left|v_u^*\right|<\frac{1}{\sigma_e}\right]\notag\\
&={\rm {Pr}}\left[-\frac{1}{\sigma_e}<\Phi^{-1}\left(\frac{k}{N}\right)<\frac{1}{\sigma_e}\right]\notag\\
    &={\rm {Pr}}\left[N\Phi\left(-\frac{1}{\sigma_e}\right)<k<N\Phi\left(\frac{ 1}{\sigma_e}\right)\right].\notag
\end{align}
where the last step follows from the monotone increasing property of $\Phi^{-1}$.
%
Since
\begin{align}
    y_i=
    \begin{cases}
       -1,&{\rm with~probability}\quad \Phi \left(-\frac{w_0}{\sigma_z} \right); \\
       1,&{\rm with~probability}\quad \Phi \left(\frac{w_0}{\sigma_z}\right),\notag
   \end{cases}
\end{align}
the result (\ref{Cal_pro}) is established.
\end{proof}

\end{document}